%% file: main.tex
\documentclass[sigconf,nonacm]{acmart}

\setcopyright{none}
\usepackage{stfloats}
\usepackage{algorithm}
\usepackage{algpseudocode}
\usepackage{multirow}
\usepackage{adjustbox}
\usepackage{xspace}
\usepackage{amsthm}

\newcommand{\nlist}{n_\mathrm{list}}
\newcommand{\nprobe}{n_\mathrm{probe}}
\newcommand{\stp}[1]{\textcircled{\raisebox{-0.5pt}{\scriptsize #1}}}
\newcommand{\OURS}{RaBitQ-SSD\xspace}

\begin{document}
\sloppy
\title{RaBitQ-SSD: Split Codes and Pipelined I/O for SSD-Resident Vector Search}

\settopmatter{authorsperrow=4}

\author{Yuexuan Xu}
\affiliation{%
  \institution{Nanyang Technological University, Singapore}
 \city{}
  \country{}
}
\email{yuexuan001@e.ntu.edu.sg}

\author{Jianyang Gao}
\affiliation{%
  \institution{ETH Zurich}
  \city{Zurich}
  \country{Switzerland}
}
\email{jianyang.gao@inf.ethz.ch}

\author{Michael Norris}
\affiliation{%
  \institution{Meta}
  \city{Menlo Park}
  \country{United States}
}
\email{mnorris@meta.com}

\author{Alibek Zhakubayev}
\affiliation{%
  \institution{Meta}
  \city{Menlo Park}
  \country{United States}
}
\email{alibek@meta.com}

\author{Pankaj Singh}
\affiliation{%
  \institution{Meta}
  \city{Bellevue}
  \country{United States}
}
\email{pankajsingh@meta.com}

\author{Junjie Qi}
\affiliation{%
  \institution{Meta}
  \city{Bellevue}
  \country{United States}
}
\email{jqi@meta.com}

\author{Matthijs Douze}
\affiliation{%
  \institution{Meta}
  \city{Paris}
  \country{France}
}
\email{matthijs@meta.com}

\author{Cheng Long}
\affiliation{%
  \institution{Nanyang Technological University, Singapore}
 \city{}
  \country{}
}
\email{c.long@ntu.edu.sg}

\renewcommand{\shortauthors}{Xu et al.}

\begin{abstract}

SSD-resident approximate nearest-neighbor search is essential when vector collections exceed DRAM capacity.
The challenge is to reduce SSD reads and hide I/O latency
through concurrent reads and overlap with computation.
However, for graph-based search, progressive candidate discovery limits advance I/O planning while IVF search reads inverted lists in full, incurring unnecessary SSD reads.
In this paper, we present \OURS, an extension of IVF-RaBitQ for SSD-resident vector search.
Specifically, we propose Split-RaBitQ, which retains a configurable
prefix of each binary code in DRAM, allowing in-memory code storage
below one bit per dimension.
Using this partial representation, it provides an unbiased distance
estimator with a probabilistic error bound.
Once the in-memory coarse quantizer identifies candidate lists,
this bound supports pruning individual candidates within them,
enabling finer-grained SSD access.
We also design an asynchronous search pipeline that coordinates candidate
pruning with SSD read scheduling to reduce unnecessary reads while
overlapping I/O with computation.
On datasets ranging from $5$ million to $1$ billion vectors, \OURS
delivers up to $1.74\times$ the throughput of graph-based baselines at $90\%$ recall,
while reducing SSD page reads by up to $3.8\times$.
Its on-SSD indexes are up to $7.0\times$ smaller and $10.0\times$
faster to build than DiskANN.
We also build and search an index of $10$ billion vectors on a single
machine with one SSD, achieving over $3{,}000$ queries per
second at $90\%$ recall.

\end{abstract}

\maketitle

\pagestyle{plain}
\thispagestyle{plain}

\input{content/1-introduction}
\input{content/2-preliminaries}
\input{content/3-Split-RaBitQ}
\input{content/4-RaBitQ-SSD-pipeline}
\input{content/5-experiment}
\input{content/6-related_work}
\input{content/7-conclusion}

\bibliographystyle{ACM-Reference-Format}
\bibliography{main}


\input{content/appendix_proof}

\end{document}

%% file: content/1-introduction.tex
\section{Introduction}
\label{ref:intro}

\begin{figure}[!t]
  \centering
  \includegraphics[width=\linewidth]{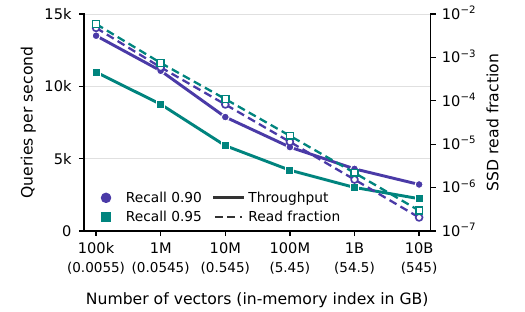}
  \caption{\OURS scales from 100 thousand to 10 billion
  DINO vectors ($1,024$ dimensions) on one server with one SSD.
  Solid lines show throughput for a given recall target (left axis); dashed lines show the fraction
  of the on-SSD data that is read to perform the search (right axis, log scale): at 10B scale it is less than one millionth.
  }
  \vspace{-3mm}
  \label{fig:dino-scaling-splash}
\end{figure}

Modern embedding models represent unstructured data, such as images and text, as high-dimensional vectors, typically with hundreds to thousands of dimensions.
Approximate nearest-neighbor (ANN) search retrieves vectors close to a query without exhaustively scanning the collection and underlies applications such as semantic search, recommendation, and retrieval-augmented generation~\cite{lewis2020rag,openai2024embedding}.
As collections grow, however, the vectors themselves can exceed the memory capacity of a single machine: one billion $1024$-dimensional vectors in FP32 already occupy $4$\,TB.
Flash storage provides a natural storage tier for such collections: modern flash-based SSDs offer roughly $10$--$50\times$ lower cost per byte than DRAM
and a single PCIe~4.0 NVMe SSD can deliver over one million random-read IOPS~\cite{haas2023nvme}.
Consequently, storing the vectors on SSD while keeping compressed representations in memory has become a practical and cost-effective design for billion-scale ANN search~\cite{subramanya2019diskann,chen2021spann,wang2024starling,guo2025pipeann}.

DiskANN~\cite{subramanya2019diskann} established a widely adopted graph-based design
for SSD-based ANN search in which compact quantized codes in DRAM guide
the search while the graph and full-precision vectors reside on SSD.
Its synchronous beam search batches  reads to exploit device
parallelism, but each search round waits for those reads to complete
before the next round, limiting I/O--computation overlap.
Starling~\cite{wang2024starling} reduces SSD I/O through locality-aware graph reordering and shorter traversal paths.
However, as vector dimensionality grows and vertex records become larger,
fewer related vertices can be co-located within the same SSD block.
PipeANN~\cite{guo2025pipeann} breaks the strict compute--I/O ordering of conventional beam search
by speculatively issuing reads while computation proceeds.
However, later I/O completions may reveal better candidates, making some speculative reads unnecessary.
Subsequent work further optimizes disk layouts and
caching~\cite{ni2023diskannpp,kang2025pageann,bamg,govector},
the placement of compressed
codes~\cite{tatsuno2024aisaq,lm_diskann,zhao2026veloann},
distance computation~\cite{chen2026alayalaser},
and I/O--computation scheduling~\cite{kang2026laann}.
Despite these optimizations, graph methods with SSD-resident adjacency lists retain a structural constraint: candidates are discovered
progressively as the traversal proceeds.
Expanding an uncached vertex requires accessing its on-disk record to obtain its neighbors
for subsequent exploration.
This causality
limits how far SSD accesses can be planned ahead of time.

On the other hand, IVF-based indexes organize vectors into inverted lists,
avoiding the cost of building and storing a proximity graph over the full
collection. An in-memory coarse index identifies candidate lists before
accessing their SSD-resident contents.
Among these methods, SPANN~\cite{chen2021spann} uses query-aware pruning to select inverted lists,
while HELMSMAN~\cite{huang2026helmsman} learns how many lists to probe.
Both read the selected lists in full from SSD.

This motivates finer-grained I/O pruning within probed inverted lists.
If an in-memory representation provides distance estimates with lower bounds for each vector,
the bounds can help rule out vectors that cannot improve the current $k$NN result before fetching their SSD-resident data.
The in-memory IVF-RaBitQ~\cite{RaBitQ,ExtendedRaBitQ,rabitqlibrary} supports such
error-bounded pruning by maintaining two representations of each vector.
It scans a compact $D$-bit RaBitQ code to
estimate distances and prune candidates, and accesses a higher-precision
representation only for the surviving candidates,
where $D$ denotes the vector dimensionality.
This separation naturally suggests an SSD-based extension: retain the compact
codes in DRAM for pruning and place the higher-precision representations on
SSD for re-ranking, but this exposes two challenges.
First,
this direct extension retains the full binary code
and cannot accommodate memory budgets below one bit per dimension that is necessary on billion scale datasets,
e.g., for 10 billion 1024-dimensional vectors, the binary codes alone require about 1.28TB of DRAM.
Second, the in-memory pruning needs careful coordination with I/O pipelining so as to reduce unnecessary reads and enable I/O--computation overlap, e.g., fetching the candidates surviving the pruning individually wastes page bandwidth, issuing pages too early can read data that later pruning would eliminate, etc.

In this paper, we present \OURS, an SSD-based extension of
in-memory IVF-RaBitQ that addresses these two challenges.
First, we propose Split-RaBitQ (\S\ref{sec:split}), which keeps only a \emph{prefix} of $D_m \le D$ bits of each binary code in memory.
Therefore, the memory consumption is configurable by setting $D_m$.
We derive an
unbiased distance estimator and a corresponding error bound from the
prefix alone, enabling error-bounded pruning.
Second, we develop an asynchronous SSD search
pipeline that reduces SSD reads while overlapping I/O with
computation. The pipeline coalesces surviving candidates at page
granularity and issues page reads asynchronously under a bounded
in-flight I/O budget. It prioritizes more promising pages and
re-evaluates them against the latest distance threshold before issuance,
allowing updated search results to eliminate many unnecessary SSD
reads without sacrificing I/O concurrency.

In summary, we make the following contributions.
\begin{itemize}
\item We propose Split-RaBitQ (\S\ref{sec:split}), which keeps a
configurable prefix of each RaBitQ binary code in DRAM.
We derive an unbiased distance estimator and a
probabilistic error bound from the prefix, enabling candidate pruning
with an in-memory code footprint below one bit per dimension.

\item We design an asynchronous search pipeline
(\S\ref{sec:pipeline}) that reduces unnecessary SSD I/O while
sustaining I/O--computation overlap.

\item We conduct an extensive evaluation of \OURS
(\S\ref{sec:experiment}) on six datasets ranging from $5$M to $1$B
vectors and $512$--$1536$ dimensions, comparing against strong baselines.
We further evaluate scalability from $100$ thousand to
$10$ billion DINO vectors on a single server with one SSD
(Figure~\ref{fig:dino-scaling-splash}).

\end{itemize}

\OURS is open-sourced at \url{https://github.com/VectorDB-NTU/RaBitQ-SSD}.

%% file: content/2-preliminaries.tex
\section{Preliminaries}
\label{sec:pre}

\subsection{ANN Search and IVF}
\label{sec:pre:ann}

Let $\mathcal{X} \subset \mathbb{R}^D$ be a collection of $N$ data
vectors, and let $q_r \in \mathbb{R}^D$ be a query vector. Given a
distance function, a $k$-nearest-neighbor ($k$NN) query asks for the $k$
vectors in $\mathcal{X}$ closest to $q_r$. In large-scale vector
search, this query is typically answered approximately and thus is called approximate nearest neighbor (ANN) query: an ANN
algorithm returns a set $\mathcal{R}$ of $k$ vectors, whose accuracy is
measured by $\mathrm{recall}@k = |\mathcal{R} \cap \mathcal{G}| / k$,
where $\mathcal{G}$ denotes the exact $k$NN set.

The inverted file (IVF)~\cite{jegou2011pq} is a classical
partitioning-based ANN index that underlies many widely deployed vector
search systems~\cite{douze2026faisslib,wang2021milvus,wei2020analyticdb}.
At build time, the dataset is partitioned into $\nlist$ clusters, typically by $k$-means.
The centroids $c_1, \dots, c_{\nlist}$ serve as a coarse quantizer, each vector being assigned to its nearest centroid, and the vectors assigned to $c_i$ form the inverted list $L_i$.
At query time, IVF scans the $\nprobe$ lists whose centroids are closest to $q_r$, in ascending order of centroid distance; the parameter $\nprobe$ controls the accuracy--efficiency trade-off.
Though IVF narrows the search to the probed lists, scanning a list still requires estimating the distance from $q_r$ to every vector in it, so fast distance estimation with a small memory footprint is crucial.
It can be achieved through vector quantization such as RaBitQ~\cite{RaBitQ,ExtendedRaBitQ}, which we review next.

\subsection{RaBitQ and Its Error Bound}
\label{sec:pre:rabitq}

RaBitQ~\cite{RaBitQ} quantizes a vector relative to a center vector, for which the cluster centroid is the natural choice.
RaBitQ's notation is used throughout: for a raw data vector $o_r$ assigned to centroid $c$ and a raw query $q_r$,
$o := (o_r - c)/\lVert o_r - c\rVert$ and
$q := (q_r - c)/\lVert q_r - c\rVert$ are the unit residuals.
The distance metric is $d(q_r, o_r)=\lVert o_r - q_r\rVert^2$.
Expanding either around $c$ yields $d(q_r, o_r)=-2\,\lVert o_r - c\rVert\,\lVert q_r - c\rVert\,\langle o, q\rangle + K$.
$K$ groups the terms precomputed per vector or computed once per probed cluster~\cite{RaBitQ}.
Therefore the per-vector computation reduces to estimating the inner product between two unit vectors $\langle o, q\rangle$, and RaBitQ focuses on quantizing unit vectors.
This decomposition works similarly when the nearest neighbor search in L2 space is replaced by maximum inner product search.

The codebook of RaBitQ is the set of $2^D$ vectors $\{\pm 1/\sqrt{D}\}^D$ rotated
by a random orthogonal matrix $P$, sampled once per index. The
quantized vector of $o$ is the nearest codeword $\bar{o} := P\bar{x}$,
$\bar{x} \in \{\pm 1/\sqrt{D}\}^D$; since
$\langle o, P\bar{x}\rangle = \langle P^{-1}o, \bar{x}\rangle$,
$\bar{x}$ takes the
signs of the coordinates of the rotated vector $P^{-1}o$, stored as the
$D$-bit \emph{binary code} $\bar{x}_b \in \{0,1\}^D$ with
$\bar{x} = (2\bar{x}_b - \mathbf{1})/\sqrt{D}$.

RaBitQ estimates $\beta=\langle o, q\rangle \approx \widehat{\beta}_D =
\langle \bar{o}, q\rangle / \langle \bar{o}, o\rangle$,
where $\langle \bar{o}, o\rangle$ is a per-vector scalar precomputed at indexing time as follows~\cite{RaBitQ}.
\begin{equation}
\label{eq:alignment}
\langle \bar{o}, o\rangle
= \langle \bar{x}, P^{-1}o\rangle
= \frac{1}{\sqrt{D}} \sum_{i=1}^{D} \bigl\lvert (P^{-1}o)[i] \bigr\rvert
\end{equation}

\begin{lemma}[Theorem~3.2 of~\cite{RaBitQ}]
\label{lem:rabitq-bound}
$\widehat{\beta}_D = \langle \bar{o}, q\rangle / \langle \bar{o}, o\rangle$ is an unbiased
estimator of $\beta=\langle o, q\rangle$, and
\begin{equation*}
\Pr\Biggl\{\,
\Bigl\lvert
\beta -
\widehat{\beta}_D
\Bigr\rvert
>
\sqrt{\frac{1 - \langle \bar{o}, o\rangle^2}
           {\langle \bar{o}, o\rangle^2}}
\cdot \frac{\epsilon_0}{\sqrt{D-1}}
\,\Biggr\}
\;\le\; 2 e^{-c_0 \epsilon_0^2} ,
\end{equation*}
where $\epsilon_0$ is a parameter that controls the failure probability
and $c_0$ is a constant.
\end{lemma}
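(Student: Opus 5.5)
The plan is to reduce everything to the distribution of a uniformly random unit vector, which is what the random rotation $P$ gives us. Fix $o$ and $q$ and condition on $P^{-1}o$. Since $P$ is Haar-distributed, the pair $(P^{-1}o, P^{-1}q)$ is uniformly distributed on the set of pairs of unit vectors with inner product $\beta$. Write $P^{-1}q = \beta\, P^{-1}o + \sqrt{1-\beta^2}\, e$, where $e$ is a unit vector orthogonal to $P^{-1}o$. Conditioned on $P^{-1}o$, the vector $e$ is uniform on the unit sphere $S^{D-2}$ of the orthogonal complement of $P^{-1}o$.

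Next I compute the estimator explicitly:
\[
\langle \bar{o}, q\rangle = \langle \bar{x}, P^{-1}q\rangle = \beta\,\langle \bar{o}, o\rangle + \sqrt{1-\beta^2}\,\langle \bar{x}, e\rangle .
\]
Dividing by $\langle \bar{o}, o\rangle$ gives
\[
\widehat{\beta}_D - \beta = \sqrt{1-\beta^2}\,\frac{\langle \bar{x}, e\rangle}{\langle \bar{o}, o\rangle}.
\]
The vector $\bar{x}$ is a deterministic function of $P^{-1}o$, so decompose it as $\bar{x} = \langle \bar{x}, P^{-1}o\rangle P^{-1}o + \bar{x}_\perp$. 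The component $\bar{x}_\perp$ lies in the orthogonal complement and has norm $\sqrt{1-\langle\bar{o},o\rangle^2}$, because $\lVert \bar{x}\rVert = 1$. Hence $\langle \bar{x}, e\rangle = \langle \bar{x}_\perp, e\rangle$, which is the projection of a uniform vector on $S^{D-2}$ onto a fixed direction, scaled by $\lVert\bar{x}_\perp\rVert$.

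Unbiasedness follows from the symmetry $e \mapsto -e$ of the conditional distribution: it gives zero conditional mean, hence zero unconditional mean. For the tail, I use the standard concentration of a single coordinate $u_1$ of a uniform vector on $S^{D-2}\subset\mathbb{R}^{D-1}$:
\[
\Pr\{|u_1| > t\} \le 2e^{-c_0 (D-1) t^2}.
\]
This can be obtained from Lévy's isoperimetric inequality, or from the representation $u = g/\lVert g\rVert$ with Gaussian $g$. Setting $t = \epsilon_0/\sqrt{D-1}$ and multiplying by $\sqrt{1-\beta^2}\le 1$ and by $\lVert\bar{x}_\perp\rVert/\langle\bar{o},o\rangle$ yields
\[
|\widehat{\beta}_D - \beta| \le \sqrt{\frac{1-\langle\bar{o},o\rangle^2}{\langle\bar{o},o\rangle^2}}\cdot\frac{\epsilon_0}{\sqrt{D-1}}
\]
with probability at least $1-2e^{-c_0\epsilon_0^2}$. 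This holds conditionally on $P^{-1}o$, and therefore also unconditionally.

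The main obstacle is the conditioning step. I need to justify that $e$ is uniform on the complementary sphere independently of $\bar{x}$, even though $\bar{x}$ depends on the same rotation $P$. The route I would try first is to note that Haar measure is invariant under rotations fixing $P^{-1}o$. Such rotations leave $\bar{x}$ unchanged while acting transitively on $e$, and this gives the required conditional uniformity. A secondary check is that $\langle \bar{o}, o\rangle > 0$ almost surely, so the division is well defined; this holds because $\langle \bar{o}, o\rangle$ is a sum of absolute values by \eqref{eq:alignment}.
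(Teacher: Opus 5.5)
Your proposal is correct and follows essentially the same route as the RaBitQ proof the paper cites, whose Lemma~B.3 the appendix replays for the prefix. You decompose $q=\beta o+\sqrt{1-\beta^2}\,e_1$, reduce the error to $\sqrt{1-\beta^2}\,\langle\bar{o},e_1\rangle/\langle\bar{o},o\rangle$, show that $\langle\bar{o},e_1\rangle$ is distributed as $\sqrt{1-\langle\bar{o},o\rangle^2}\,X_1$ with $X_1\sim p_{D-1}$ independent of $P^{-1}o$ (your $\bar{x}_\perp$ decomposition is exactly the alignment step with $v_2$), and finish with $\mathbb{E}X_1=0$ and the coordinate tail bound in dimension $D-1$. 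The one difference is that you get the conditional uniformity of $P^{-1}e_1$ from stabilizer invariance of Haar measure rather than the Gram--Schmidt deferred-decision argument; also, to pass from conditional to unconditional unbiasedness you should cite $\langle\bar{o},o\rangle\ge 1/\sqrt{D}$, which makes the error bounded, rather than mere positivity.
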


\noindent
Lemma~\ref{lem:rabitq-bound} gives a confidence interval for
$\langle o, q\rangle$: the estimate plus or minus the error bound above.
Since the coefficient of $\langle o, q\rangle$ in $d(q_r, o_r)$ is negative, substituting the upper endpoint of its confidence interval yields a lower bound $\mathrm{LB}(q_r, o_r)$ on the distance that holds with probability at least $1 - 2e^{-c_0\epsilon_0^2}$.
The in-memory IVF-RaBitQ uses this
bound to prune candidates before re-ranking (\S\ref{sec:pre:ivfrabitq}).

The numerator of the estimate $\langle \bar{o}, q\rangle
= \langle \bar{x}, P^{-1}q\rangle$ is
a sum of $D$ coordinate-wise products that can be stored in a lookup table built once per query~\cite{RaBitQ,rabitqlibrary}.
The vectors are laid out in batches of 32 packed codes so that the lookups can be performed from SIMD registers, a technique called \emph{FastScan}~\cite{andre2015fastscan}.
All other terms of the estimate and
the bound that depend on the data vector are folded at indexing time
into three per-vector scalars, the \emph{factors}, stored beside the
code~\cite{rabitqlibrary}.
For higher accuracy, Extended
RaBitQ~\cite{ExtendedRaBitQ} spends $B$ bits per coordinate: the most
significant bit is the sign bit above, and the remaining $B-1$ bits constitute the \emph{ex-code}; its estimate needs two further factors per
vector~\cite{ExtendedRaBitQ,rabitqlibrary}.
The multi-bit RaBitQ distance estimation~\cite{ExtendedRaBitQ} reuses the sign-bit sum and refines it with the ex-code part.

\subsection{In-Memory IVF-RaBitQ and a Naive SSD Extension}
\label{sec:pre:ivfrabitq}

The in-memory IVF-RaBitQ~\cite{RaBitQ,ExtendedRaBitQ,rabitqlibrary} integrates Extended RaBitQ into
an in-memory IVF index.
Each inverted list stores the binary codes of its vectors packed in
batches of 32 for FastScan.
The search maintains \emph{KNNSet}, the $k$ candidates
with the smallest refined distances found so far.
The threshold $\mathrm{dist}_k$ is the largest of these distances,
or $+\infty$ before $k$ candidates have been collected.
In a probed list, every batch yields estimates and lower bounds
(for batches of 32 candidates). Candidates with
$\mathrm{LB}(q_r, o_r) < \mathrm{dist}_k$ are re-ranked using their
ex-codes and may update KNNSet; all others are pruned.
No raw vectors are kept; the refined distances are
used for the final ranking~\cite{ExtendedRaBitQ}.

\label{sec:pre:naive}%
The most direct way to port this index to SSD, suggested
in~\cite{ExtendedRaBitQ}, keeps the binary code and the three factors in memory, and moves the ex-codes, with their two
re-ranking factors, to SSD.
The search is unchanged up to re-ranking:
when a candidate survives pruning, the page holding its ex-code is read
from SSD, scanning blocks until the read completes, and the candidate
is re-ranked. However, this naive extension has two problems.

\noindent\textbf{Problem 1: a rigid memory footprint.}
The binary code has a fixed size, one bit per coordinate, so the
in-memory part of the index occupies $ND/8$ bytes plus the factors,
which is over 1TB when $N$ is 10 billion and $D = 1024$.
In this paper, we present Split-RaBitQ (\S\ref{sec:split}),
which keeps only a \emph{prefix} of the binary code in memory,
and show that the prefix still yields a lower bound for pruning.
As a result, the size of the in-memory code is configurable by setting the length of the prefix.

\label{sec:pre:ssd}%
\noindent\textbf{Problem 2: blocking I/O.}
A random SSD read of $4$\,KB typically takes 10s of microseconds.
Issuing one read at a time therefore completes $\sim$10k
of reads per second, far below the 1M+ 4\,KB random
reads per second that an SSD can sustain~\cite{haas2023nvme}.
Reaching this throughput requires hundreds of concurrent outstanding
requests.
Instead of blocking on every read in the naive extension,
we develop an asynchronous pipeline (\S\ref{sec:pipeline}) that overlaps I/O with the computation of scanning the codes in memory.

%% file: content/3-Split-RaBitQ.tex
\section{Split-RaBitQ}
\label{sec:split}

Split-RaBitQ, resolves Problem~1 by
splitting the binary code between memory and SSD:
\S\ref{sec:split:twotier} gives the design, and \S\ref{sec:split:lb}
derives the lower bound and its guarantee from RaBitQ's analysis.

\subsection{The Code Splitting Design}
\label{sec:split:twotier}

Split-RaBitQ keeps RaBitQ's quantization, and what changes is how many
bits of the binary code are kept in memory. %
The first $D_m$ bits of the binary code with $D_m \le D$,
referred to as the \emph{prefix}, stay in memory together with the three factors used to compute the lower bound in the pruning stage.
$D_m$ is fixed at build time according to the RAM budget, and the prefix is packed in batches of 32 for FastScan (\S\ref{sec:pre:rabitq}).
The remaining $D_s := D - D_m$ bits of the binary code, the ex-code, and the two re-ranking factors reside on SSD.

For codeword $\bar{x}$, let $\bar{x}_{D_m}$ be $\bar{x}$ with the remaining $D - D_m$
coordinates set to zero.
Let $\bar{o}_{D_m} := P\bar{x}_{D_m}$. Since
$P$ is orthogonal
and $\bar{x} \in \{\pm 1/\sqrt{D}\}^D$,
we have
$\lVert \bar{o}_{D_m}\rVert^2 = \lVert \bar{x}_{D_m}\rVert^2 = D_m/D$.
Split-RaBitQ estimates $\langle o, q\rangle$ from the prefix by
\begin{equation}
\label{eq:split-est}
\widehat{\beta}_{D_m} =
\frac{D}{D_m} \cdot
\frac{\langle \bar{o}_{D_m}, q\rangle}{\langle \bar{o}, o\rangle} .
\end{equation}
As before, $\langle \bar{o}, o\rangle$ is computed at indexing time
when all coordinates are available, see Eq.~\eqref{eq:alignment}.
The numerator
$\langle \bar{o}_{D_m}, q\rangle = \langle \bar{x}_{D_m}, P^{-1}q\rangle$
is the inner product $\langle \bar{o}, q\rangle$ of
\S\ref{sec:pre:rabitq} restricted to the first $D_m$ coordinates.
The factor $D/D_m$ compensates for the omitted $D - D_m$ coordinates (\S\ref{sec:split:lb}).
As in RaBitQ,
the estimate and lower bound use three precomputed per-vector
factors.
For re-ranking, it fetches the remaining $D - D_m$ bits of the binary code, the ex-code, and the factors from SSD and computes the estimate based on the codes as in the original RaBitQ method (\S\ref{sec:pre:rabitq}).

\begin{figure*}[t]
  \centering
  \includegraphics[width=\textwidth]{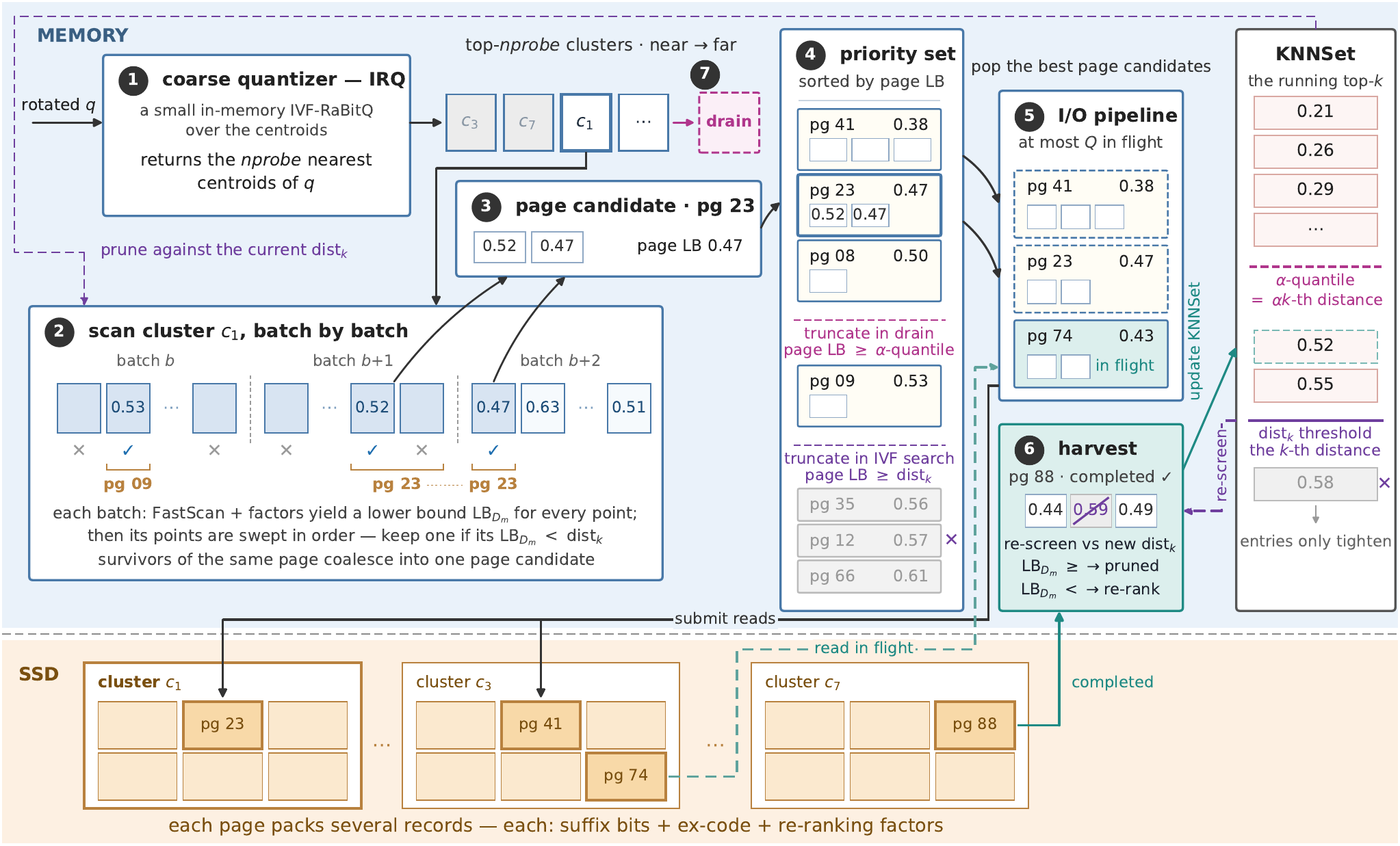}
  \caption{Overview of the \OURS search pipeline. Memory (top) holds the coarse
  quantizer and the per-vector prefix codes; the SSD (bottom) holds one
  record per vector, in cluster-contiguous 4\,KB pages.
  See \S~\ref{sec:pipe:overview} for details on the search process.
  }
  \label{fig:overview}
\end{figure*}

\subsection{A Lower Bound from the Prefix Code}
\label{sec:split:lb}

Let $\beta := \langle o, q\rangle$.
The projection of
$q$ onto $o$ is $\beta o$, and the remaining component $q-\beta o$
is perpendicular to $o$. For non-collinear $o$ and $q$, let $e_1$
be the unit vector along this perpendicular component. Then
$q = \beta o + \sqrt{1-\beta^2}\,e_1$.
For collinear $o$ and $q$, the perpendicular component vanishes.
Substituting this decomposition into Eq.~\eqref{eq:split-est},
the Split-RaBitQ estimation error becomes
\begin{equation}
\label{eq:split-decomp}
\widehat{\beta}_{D_m}
- \beta
= \beta\,\left(\frac{D}{D_m}\frac{\langle \bar{o}_{D_m}, o\rangle}{\langle \bar{o}, o\rangle} - 1\right)
+ \sqrt{1-\beta^2}\;
  \frac{D}{D_m}
  \frac{\langle \bar{o}_{D_m}, e_1\rangle}{\langle \bar{o}, o\rangle} .
\end{equation}

For the full code ($D_m = D$), %
the first term vanishes and the remaining term is RaBitQ's
estimation error.

By Eq.~\eqref{eq:alignment},
$\langle \bar{o}_{D_m}, o\rangle / \langle \bar{o}, o\rangle$ is the
fraction of the $\ell_1$ mass of  $P^{-1}o$ that lies in the prefix
coordinates.
By exchangeability of the coordinates of $P^{-1}o$,  each
coordinate contributes a $1/D$ share of this mass in expectation.
The scaling factor  $D/D_m$ rescales the prefix estimate to the full vector.
The deviation of
the scaled prefix fraction
from one produces the first
error term, while the second comes from the component of
$\bar{o}_{D_m}$ along $e_1$.
Theorem~\ref{thm:split-lb} states the error bound.

\begin{theorem}[Prefix estimator]
\label{thm:split-lb}
$\widehat{\beta}_{D_m}  = \frac{D}{D_m}\frac{\langle \bar{o}_{D_m}, q\rangle}{\langle \bar{o}, o\rangle}$
is an unbiased estimator of $\langle o, q\rangle$. For
$0 < \epsilon \le \sqrt{D_m}/3$, the error bound of the estimator is
given as
\begin{equation*}
\Pr\Biggl\{\,
\Bigl\lvert
\widehat{\beta}_{D_m}
- \langle o, q\rangle
\Bigr\rvert
> \frac{\epsilon}{\sqrt{D_m}}
\,\Biggr\}
\;\le\; 2 e^{-c_1\epsilon^2} ,
\end{equation*}
where $c_1$ is a constant.
\end{theorem}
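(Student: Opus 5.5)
We work from the decomposition in Eq.~\eqref{eq:split-decomp} and treat its two terms separately. Throughout, $u := P^{-1}o$ is uniformly distributed on the unit sphere $S^{D-1}$, because $P$ is a uniformly random orthogonal matrix. Also $\bar{x}=\mathrm{sign}(u)/\sqrt{D}$ and $\langle \bar{o}, o\rangle=\|u\|_1/\sqrt{D}$. Writing $v := P^{-1}e_1$, we have
\begin{equation*}
\frac{D}{D_m}\frac{\langle \bar{o}_{D_m}, o\rangle}{\langle \bar{o}, o\rangle}=\frac{D}{D_m}\frac{\sum_{i\le D_m}|u_i|}{\sum_{i\le D}|u_i|},
\qquad
\frac{D}{D_m}\frac{\langle \bar{o}_{D_m}, e_1\rangle}{\langle \bar{o}, o\rangle}=\frac{D}{D_m}\frac{\sum_{i\le D_m}\mathrm{sign}(u_i)\,v_i}{\|u\|_1}.
\end{equation*}

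\textbf{Unbiasedness.} For the first term, exchangeability of $|u_1|,\dots,|u_D|$ gives $\mathbb{E}\bigl[|u_i|/\|u\|_1\bigr]=1/D$ for each $i$. Hence the scaled prefix fraction has mean exactly $1$, and the first term has mean zero. For the second term, $v$ is a unit vector orthogonal to $u$. Conditional on $u$, the vector $v$ is uniform on the great sphere $u^\perp$, because $e_1$ is determined by $o,q$ and $P$ is Haar-distributed. The reflection $v\mapsto -v$ preserves this conditional law, so the conditional mean of the second term is $0$. Together these give $\mathbb{E}[\widehat{\beta}_{D_m}]=\beta$.

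\textbf{Concentration of the first term.} Represent $u=g/\|g\|_2$ with $g\sim N(0,I_D)$. The scaled fraction then equals
\begin{equation*}
\frac{D}{D_m}\,\frac{\sum_{i\le D_m}|g_i|}{\sum_{i\le D}|g_i|}.
\end{equation*}
The numerator average $\frac{1}{D_m}\sum_{i\le D_m}|g_i|$ is an average of $D_m$ i.i.d.\ sub-Gaussian variables with mean $\sqrt{2/\pi}$. By Hoeffding's inequality it deviates by more than $t$ with probability at most $2e^{-cD_mt^2}$. The denominator average concentrates even better, since $D\ge D_m$. A ratio argument then gives, for $t$ at most a small constant,
\begin{equation*}
\Pr\{|\text{fraction}-1|>t\}\le 4e^{-cD_m t^2}.
\end{equation*}
The hypothesis $\epsilon\le\sqrt{D_m}/3$ is exactly what keeps $t=\epsilon/(2\sqrt{D_m})\le 1/6$ in the regime where linearizing the ratio is valid. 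Since $|\beta|\le 1$, the first term exceeds $\epsilon/(2\sqrt{D_m})$ with probability at most $4e^{-c\epsilon^2}$.

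\textbf{Concentration of the second term.} Condition on $u$. The numerator $\sum_{i\le D_m}\mathrm{sign}(u_i)v_i$ is $\langle w, v\rangle$ with $w$ equal to $\mathrm{sign}(u)$ restricted to the prefix, so $\|w\|_2^2=D_m$. For $v$ uniform on $S^{D-1}\cap u^\perp$, the standard spherical tail bound, which is essentially the argument behind Lemma~\ref{lem:rabitq-bound}, gives
\begin{equation*}
\Pr\bigl\{|\langle w, v\rangle|>s\,\|w\|_2\bigr\}\le 2e^{-c(D-1)s^2}.
\end{equation*}
The denominator satisfies $\|u\|_1\ge c'\sqrt{D}$ with probability $1-2e^{-cD}$, again by Gaussian concentration. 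On that event the second term is at most
\begin{equation*}
\frac{D}{D_m}\cdot\frac{s\sqrt{D_m}}{c'\sqrt{D}}=\frac{s\sqrt{D}}{c'\sqrt{D_m}}.
\end{equation*}
Choosing $s=c'\epsilon/(2\sqrt{D})$ makes this $\epsilon/(2\sqrt{D_m})$, with failure probability $2e^{-c\epsilon^2 (D-1)/D}$. A union bound over the two terms and the denominator event, followed by absorbing constants (using $\epsilon^2\le D_m/9\le D$ so that $e^{-cD}\le e^{-c\epsilon^2}$), gives $2e^{-c_1\epsilon^2}$.

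\textbf{Main obstacle.} The hardest step is the first term: controlling the ratio of $\ell_1$ masses with a rate depending only on $D_m$, not $D$. The difficulty is that numerator and denominator are dependent, since they share the prefix coordinates. I would handle this by writing the fraction as $\frac{A}{\alpha A+(1-\alpha)B}$, where $A$ is the prefix average, $B$ is the suffix average and $\alpha=D_m/D$, so that $A$ and $B$ are independent. Each average then gets its own Hoeffding bound, and the final step checks that the resulting constants give the $\epsilon\le\sqrt{D_m}/3$ range.
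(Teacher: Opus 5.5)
Your proposal is correct, and its skeleton matches the paper's proof. Both use the two-term split of Eq.~\eqref{eq:split-decomp} and exchangeability of the $\ell_1$ ratios for the first term's mean. For the second term, both rely on the fact that, given $u=P^{-1}o$, the vector $P^{-1}e_1$ is uniform on the unit sphere of $u^\perp$. The paper derives this fact in Lemma~\ref{lem:app:joint} through the Gram--Schmidt alignment step. Its factor $\sqrt{D_m/D-\langle\bar{o}_{D_m},o\rangle^2}$ is the norm of the projection of your $w/\sqrt{D}$ onto $u^\perp$, and the paper bounds it by $\sqrt{D_m/D}$, as you do.

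The genuine difference is how you concentrate the prefix share. You write $u=g/\lVert g\rVert_2$, note that the normalization cancels in the ratio, and apply sub-Gaussian Hoeffding to the i.i.d.\ $|g_i|$, once over the $D_m$ prefix coordinates and once over all $D$ coordinates. The paper instead stays on the sphere. It applies the Lipschitz concentration inequality of the RaBitQ report to $f(x)=\frac{1}{\sqrt{D}}\sum_{i\le D_m}|x[i]|$ with Lipschitz constant $\sqrt{D_m/D}$ (Lemma~\ref{lem:app:share}). Your route is more elementary and self-contained; the paper's reuses RaBitQ's lemmas and constants $c,c_0$ verbatim.

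Two smaller differences:
\begin{itemize}
\item You give each term an error budget of $\epsilon/(2\sqrt{D_m})$, which avoids the paper's final rescaling $\epsilon\mapsto\epsilon/9$.
\item For $\lVert u\rVert_1$ you use a fixed-deviation event with failure probability $2e^{-cD}\le 2e^{-c\epsilon^2}$, rather than the paper's $\epsilon$-dependent event $\mathcal{E}_2$.
\end{itemize}
Both versions close the same way.

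The ``main obstacle'' you identify is not actually one. A union bound over the prefix-average and full-average deviation events needs no independence; the paper does exactly this with $\mathcal{E}_1$ and $\mathcal{E}_2$. If you do switch to the split $A/(\alpha A+(1-\alpha)B)$, note that $B$ averages only $D-D_m$ terms, possibly very few. In that case you must control $(1-\alpha)(B-\mu)=\frac{1}{D}\sum_{i>D_m}(|g_i|-\mu)$, with $\mu=\sqrt{2/\pi}$, whose scale is at most $1/\sqrt{D}$. You should not require $|B-\mu|\le t$ at $B$'s own Hoeffding rate. The direct full-average route you already stated is cleaner.
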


\noindent
The proof is given in Appendix~\ref{sec:app:proof}.
As in RaBitQ, the error bound can be concisely presented as an
estimation error of $O(1/\sqrt{D_m})$ with high probability.

\noindent\textbf{Lower bound.}
Substituting the estimate $\widehat{\beta}_{D_m} + \epsilon/\sqrt{D_m}$
for $\langle o, q\rangle$ in $d(q_r, o_r)$ gives the lower bound
$\mathrm{LB}_{D_m}(q_r, o_r)$, which lies below the true distance with high probability $1-2e^{-c_1\epsilon^2}$.
Appendix~\ref{sec:app:distribution} examines the
empirical error distribution and the frequency of
lower-bound violations.

Reducing $D_m$ saves memory
and computation during the pruning phase but widens the error bound, potentially increasing
SSD reads and re-ranking work. We evaluate this trade-off in
\S\ref{sec:exp:memdim}.

%% file: content/4-RaBitQ-SSD-pipeline.tex
\section{Search Pipeline}
\label{sec:pipeline}

\subsection{Overview}
\label{sec:pipe:overview}

Figure~\ref{fig:overview} gives an overview of the search pipeline.

\noindent\textbf{Storage.}
The in-memory coarse quantizer routes a query to its $\nprobe$ nearest clusters.
Each cluster stores, for every vector, the $D_m$-bit prefix of
its binary code together with its
three factors (\S\ref{sec:split:twotier}), stored in batches of size 32.
The SSD holds one
fixed-size record per vector, consisting of the suffix bits of the binary code, the
ex-code, and the two re-ranking factors.
The records of a cluster are packed into contiguous 4~kB pages (\S\ref{sec:pre:ssd}), with some padding so that they don't cross page boundaries.
\label{sec:pipe:layout}
The in-RAM and SSD vectors are stored in the same order (by ascending distance to the centroid), so that the page containing a record can be determined directly.

\noindent\textbf{Searching.}
A KNNSet keeps track of the nearest $k$ vectors seen so far and their distances to the query, whose maximum is $\mathrm{dist}_k$.
One query proceeds as follows.
\stp{1}~The rotated query passes the
coarse quantizer, an in-memory IVF-RaBitQ (IRQ) over the cluster
centroids (\S\ref{sec:pipe:coarse}), which returns the $\nprobe$
nearest clusters, visited from near to far.
\stp{2}~Within a cluster,
the scan proceeds by FastScan batches of 32 vectors.
For each vector, the prefix codes yield the lower bound and a
point ``survives'' if it is below the current $\mathrm{dist}_k$.
\stp{3}~Surviving points of the same page
coalesce into a page candidate,
\stp{4}~which enters the priority set
keyed by the page lower bound, the minimum over its members.
\stp{5}~Reads are issued asynchronously from the priority set
while the scan continues.
After each batch, the search polls for completions and issues
additional reads while keeping at most $Q$ reads
in flight per query.
Before issuing a read, the search re-checks the page
candidate against the current $\mathrm{dist}_k$.
\stp{6}~A completed page is harvested: members that
fail the current $\mathrm{dist}_k$ are dropped, and the rest are
re-ranked from the fetched data and may update the KNNSet.
The details of pipelining the inverted-list search will be presented in \S\ref{sec:pipe:invlist}.
\stp{7}~After the
last cluster, the search drains the set and the reads in flight,
popping only candidates whose bound falls below the tighter
$\alpha$-quantile threshold (\S\ref{sec:pipe:alpha}).

\subsection{IRQ: In-Memory IVF-RaBitQ as the Coarse Quantizer}
\label{sec:pipe:coarse}

\begin{figure}[h]
  \centering
  \includegraphics[width=\linewidth]{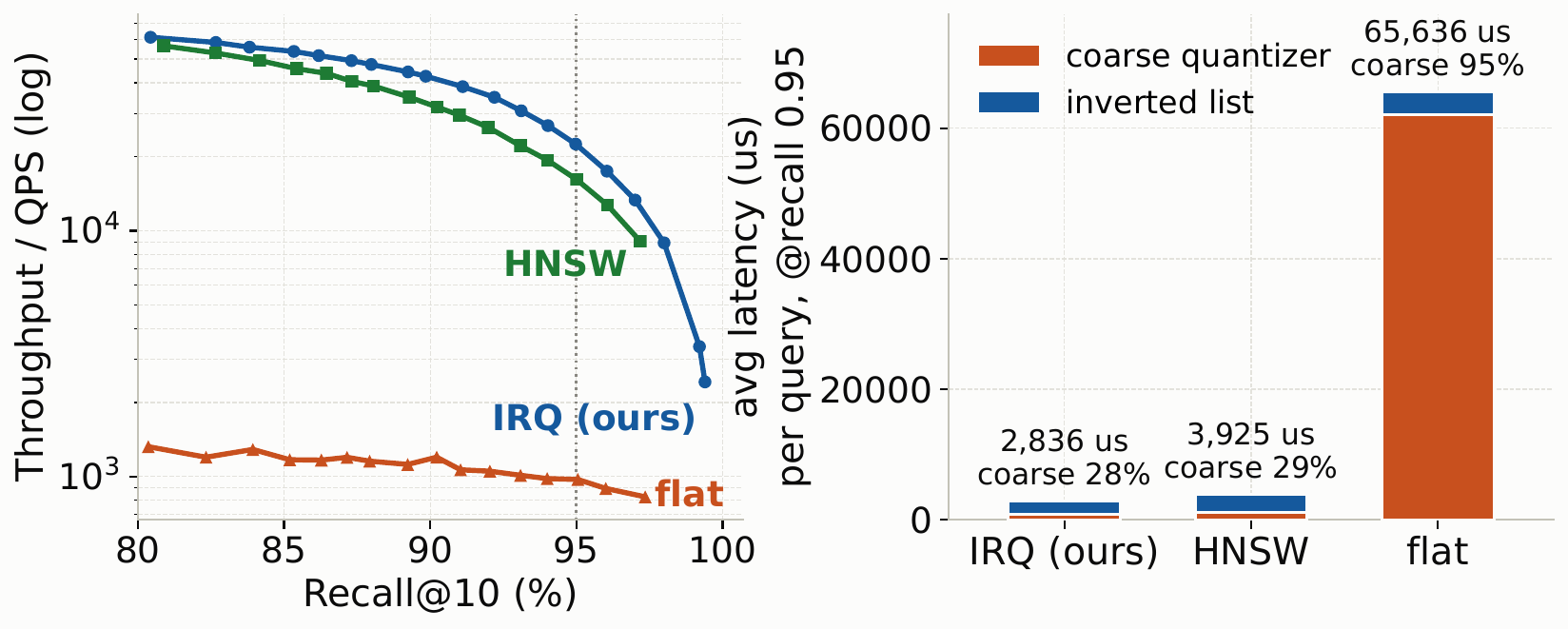}
  \caption{Coarse quantizers on YFCC-100M ($C = 96{,}422$, $k = 10$,
  $T = 64$). Left: throughput against recall. Right: average query
  latency at recall $0.95$, split into the coarse step and the rest
  (``inverted list'').}
  \label{fig:irq}
\end{figure}

Before any inverted list is visited, the coarse quantizer finds the
$\nprobe$ nearest of the $C=\nlist$ centroids.
A flat scan spends $C$
full-precision distance computations on this regardless of $\nprobe$,
while the work that follows scales with $\nprobe$. On
YFCC-100M ($C = 96{,}422$, some $10^3$ points per cluster), the flat scan
accounts for $95\%$ of the query latency at recall@10 $= 0.95$: $62$
of the $66$\,ms spent on a query go to routing alone
(Fig.~\ref{fig:irq}, right).
Building a graph index over the centroids is the standard remedy~\cite{baranchuk2018revisiting,douze2026faisslib}:
with HNSW, average query latency falls to $3.9$\,ms, of which the
coarse step is $29\%$.

We adopt another coarse quantizer, an in-memory IVF-RaBitQ built
over the centroids, \emph{IRQ} for short. Centroids are quantized to
$B = 9$ bits per dimension by default, which is sufficient for high
accuracy~\cite{ExtendedRaBitQ}, and a query retrieves its $\nprobe$
nearest centroids through the prune-and-re-rank flow of
\S\ref{sec:pre:ivfrabitq}. IRQ reduces the average
query latency to $2.8$\,ms ($23\times$ below the flat scan, better than the graph index; Fig.~\ref{fig:irq}, right) and achieves
the best end-to-end performance (Fig.~\ref{fig:irq}, left).
Storing the centroids as $B$-bit codes also keeps the coarse quantizer smaller.

\subsection{Pipelining the Inverted-List Search}
\label{sec:pipe:invlist}

\begin{table*}[t]
  \centering
  \caption{
  Page reads for five I/O pipeline designs of \OURS,
  on YFCC-100M at recall $0.90$
  (per-query means).
  QPS$\times$ is relative to Naive at the same $T$ and $k$.
  \emph{issued}:
  4\,KB page reads submitted. \emph{drop}: page candidates discarded
  in the pending set before any read. \emph{waste}: issued reads whose
  every member the re-screen pruned (zero re-rank).
  \emph{re-rank/surv.}: records re-ranked over points surviving the scan.
  \emph{io-wait}: share of the list-search wall blocked on reads
  (blocking designs). \emph{drain}: share of the same wall spent
  finishing outstanding I/O after the scan (async designs).
  $^{\dagger}$See \S\ref{sec:pipe:alpha}.}
  \label{tab:ladder}
  \small
  \setlength{\tabcolsep}{3pt}
  \begin{tabular}{l rrrrrrr rrrrrrr}
    \toprule
    & \multicolumn{7}{c}{$T=1$ thread} & \multicolumn{7}{c}{$T=64$ threads} \\
    \cmidrule(lr){2-8} \cmidrule(lr){9-15}
    design & QPS$\times$ & issued & drop & waste & re-rank/surv. & io-wait & drain
           & QPS$\times$ & issued & drop & waste & re-rank/surv. & io-wait & drain \\
    \midrule
    \multicolumn{15}{c}{$k = 10$} \\
    \midrule
    Naive      & 1.00 & 102 & --- & --- & 102/102 & 95\% & --- & 1.00 & 102 & --- & --- & 102/102 & 93\% & --- \\
    Coalesced  & 1.30 & 81 & --- & --- & 102/103 & 94\% & --- & 1.30 & 81 & --- & --- & 102/103 & 96\% & --- \\
    Overlapped & 6.67 & 156 & --- & 72 & 104/334 & --- & 4\% & 0.58 & 483 & --- & 400 & 103/833 & --- & 89\% \\
    Scheduled  & 8.35 & 53 & 797 & 8 & 61/2292 & --- & 6\% & 6.98 & 40 & 1204 & 1 & 52/3414 & --- & 21\% \\
    Full       & 8.36 & 54 & 779 & 8 & 61/2247 & --- & 5\% & 7.32 & 38 & 1078\rlap{$^{\dagger}$} & 1 & 50/3056\rlap{$^{\dagger}$} & --- & 10\% \\
    \midrule
    \multicolumn{15}{c}{$k = 100$} \\
    \midrule
    Naive      & 1.00 & 614 & --- & --- & 614/614 & 99\% & --- & 1.00 & 614 & --- & --- & 614/614 & 99\% & --- \\
    Coalesced  & 1.66 & 439 & --- & --- & 614/615 & 98\% & --- & 1.41 & 439 & --- & --- & 614/615 & 98\% & --- \\
    Overlapped & 15.62 & 482 & --- & 39 & 623/750 & --- & 2\% & 1.18 & 1382 & --- & 941 & 615/2773 & --- & 66\% \\
    Scheduled  & 19.94 & 241 & 1432 & 1 & 363/4367 & --- & 13\% & 6.89 & 236 & 6042 & 0 & 357/17040 & --- & 73\% \\
    Full       & 21.89 & 212 & 1495 & 0 & 333/4457 & --- & 5\% & 12.35 & 131 & 5076\rlap{$^{\dagger}$} & 0 & 248/13738\rlap{$^{\dagger}$} & --- & 19\% \\
    \bottomrule
  \end{tabular}
\end{table*}

\OURS scans the $\nprobe$ inverted lists, for which we design several pipelines of increasing complexity, starting from the
naive extension of \S\ref{sec:pre:naive}.
See Table~\ref{tab:ladder} for the experimental setup and I/O numbers cited below.

\newcommand{\pipeline}[1]{\texttt{#1}}

\noindent\textbf{\pipeline{Naive}.}
The naive extension of \S\ref{sec:pre:naive} issues one read per
surviving candidate and blocks on it.
It has one virtue: every pruning
decision sees the tightest $\mathrm{dist}_k$ available so far, because it is fully causal.
The query thread, however,
waits out every read: %
a query issues
$102$ page reads at $k{=}10$ and $614$ at $k{=}100$ on average, each
delivering one useful record out of the three a page holds, and the
thread waits on I/O for $95$--$99\%$ of the list-search time
(Table~\ref{tab:ladder}).

\noindent\textbf{\pipeline{Coalesced}.}
Points of the same page that survive pruning can share one read (\stp{3} in
Fig.~\ref{fig:overview}): because the scan visits the points of a page
consecutively (\S\ref{sec:pipe:layout}), the surviving points of a page
emerge consecutively, and its read is issued when the scan
crosses the page boundary. The speedup is due to fewer page reads:
for $k=10$, $102$ reads per query become $81$, or $1.30\times$ throughput  ($1.66\times$ throughput for $k{=}100$).
The tightest-threshold property of \pipeline{Naive} is preserved at page granularity.

\noindent\textbf{\pipeline{Overlapped}.}
Blocking is what stalls the query thread
(the grey strips of \pipeline{Naive} and \pipeline{Coalesced}
in Fig.~\ref{fig:timeline}), and this step removes it. The scan no
longer waits for reads: the surviving points of a page form a \emph{page
candidate}, whose read is submitted when all points of a page have been
scanned.
After each batch, the search polls for
completed reads without blocking and \emph{harvests} them (\stp{6} in
Fig.~\ref{fig:overview}): every member is re-screened against the
current $\mathrm{dist}_k$, and those that pass are re-ranked.
When the last cluster has been scanned, the
search keeps harvesting until every outstanding read
returns, i.e., the \emph{drain} (\stp{7} in Fig.~\ref{fig:overview}).
As $\mathrm{dist}_k$ may have tightened
since the read was submitted, some pages may have been fetched for nothing,
which Table~\ref{tab:ladder} reports as \emph{waste}.
How \pipeline{Overlapped}
performs depends on whether the SSD is saturated. Below its IOPS
ceiling ($T{=}1$), the overlap is a clear gain: over \pipeline{Coalesced},
throughput rises another $5.1\times$ at $k{=}10$ ($9.4\times$ at
$k{=}100$), even though $156$ reads are issued where \pipeline{Coalesced} needed
$81$.
These $46\%$ wasted reads are acceptable
because the benefit of overlapping I/O outweighs
the cost of the extra reads.
Once the device is saturated
($T{=}64$, where the $64$ threads together drive the SSD at its IOPS
ceiling), additional reads begin to queue.
I/O completions slow down, delaying both harvesting and tightening of $\mathrm{dist}_k$.
The resulting stale threshold allows more candidates to survive pruning and triggers still more reads.
In our trace, the number of outstanding reads for one query peaks at $677$ in Fig.~\ref{fig:timeline}b.
In this regime a query %
wastes $83\%$ of the reads  and
spends $89\%$ of its list-search time in the drain (Table~\ref{tab:ladder}).
Throughput falls below
\pipeline{Coalesced} at both $k$ ($0.45\times$ and $0.84\times$) and, at
$k{=}10$, below even the blocking \pipeline{Naive}.

\noindent\textbf{\pipeline{Scheduled}.}
This step bounds the number of reads in flight and moves each read
decision from scan time to issue time, when the threshold is tighter.
Instead of having its read submitted as soon as the scan leaves the
page, a page candidate is deposited in the \emph{priority set}
(\stp{4} in Fig.~\ref{fig:overview}), ordered by page lower bound.
Reads are issued in that order, keeping at most $Q$ in flight
(\stp{5}). After each batch, the search polls and harvests as before, and conditionally submits new reads by order of lower bound:
(a) if fewer than $Q$ are in flight and
(b) if each candidate lower bound is lower than the current
$\mathrm{dist}_k$.
The issuance stops at the first candidate that fails check (b) as the remaining candidates would fail too
or the set is emptied.
While a
candidate waits in the set the threshold keeps tightening, and most
candidates are pruned there without a read (\emph{drop} column of
Table~\ref{tab:ladder}).
Because Scheduled keeps fewer reads in flight than Overlapped, its
threshold is slightly more stalled and more points survive
the scan, but many page candidates are still dropped before incurring I/Os.
Fewer fetched pages also mean less harvest
work, visible in Fig.~\ref{fig:timeline}b as the shorter blue bar of
Scheduled: \pipeline{Overlapped} fetches $483$ pages to \pipeline{Scheduled}'s $40$ ($103$
vs.\ $52$ re-ranks).

\begin{figure*}[h]
  \centering
  \includegraphics[width=0.65\linewidth]{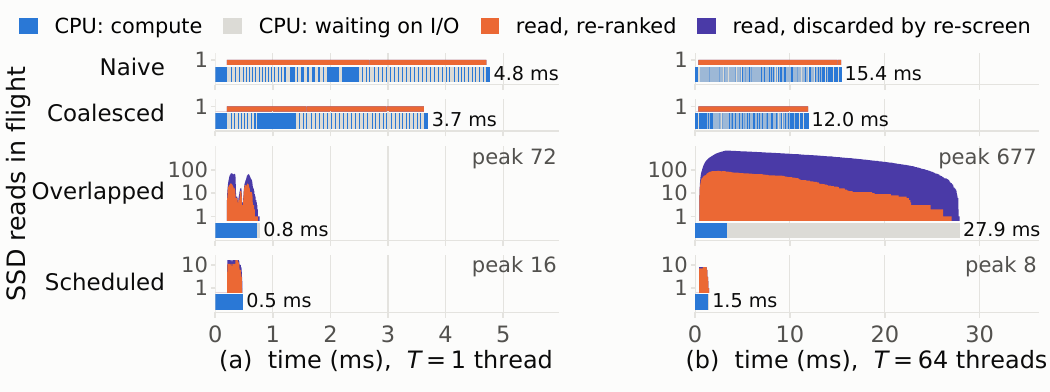}
  \caption{Measured timeline of one query under each pipeline, at
  (a) $T{=}1$ and (b) $T{=}64$ (YFCC-100M, $k{=}10$, recall $0.90$; the
  median-latency query of a full 10k-query run).
  Per pipelining step, the lower
  strip shows the CPU state over time, compute (blue) or waiting on
  I/O (grey), and the area above counts SSD reads in flight (log
  scale), split into reads whose page fed a re-rank (orange) and reads
  discarded whole by the re-screen (violet). Annotations give the peak
  in-flight count and the total latency.}
  \label{fig:timeline}
\end{figure*}

\noindent\textbf{Discussion.}
Figure~\ref{fig:timeline} shows the SSD reads in flight and CPU activity for each of the pipelines, depending on the number of threads $T$.
Below the IOPS ceiling ($T{=}1$), the gain over \pipeline{Overlapped} is modest,
$1.25\times$ at $k{=}10$ and $1.28\times$ at $k{=}100$, because
wasted reads add little wall time when the device is not saturated. The
number of reads, however, falls from $156$ to $53$ per query, fewer
than even the blocking \pipeline{Coalesced} ($81$).

Once the device is saturated ($T{=}64$), the
advantage is decisive: $40$ reads per query, essentially none wasted,
and the in-flight count stays flat at $Q$ ($8$ here, $16$ at $T{=}1$)
instead of rising into the hundreds (Fig.~\ref{fig:timeline});
the traces show $27.9$\,ms for \pipeline{Overlapped}
and $1.5$\,ms for \pipeline{Scheduled}, and
throughput is $12\times$ \pipeline{Overlapped} at $k{=}10$ and $5.8\times$ at
$k{=}100$ (Table~\ref{tab:ladder}).

We further evaluate the effect of lower-bound ordering
by replacing it with FIFO while keeping all other settings
of \pipeline{Scheduled} unchanged.
At $T=64$ and recall $0.90$, FIFO issues roughly twice as
many reads per query, reducing throughput to $43\%$ and
$53\%$ of \pipeline{Scheduled}'s for $k=10$ and $k=100$, respectively.

\subsection{$\alpha$-Quantile Threshold to Alleviate Drain}
\label{sec:pipe:alpha}

\noindent\textbf{A drainage problem.}
The drain runs after the last cluster, as no scanning is left to run in parallel with reads. %
Below the IOPS
ceiling ($T{=}1$) reads complete promptly, little remains after the
scan, and the drain represents $6$--$13\%$ of the list-search time
(Table~\ref{tab:ladder}).
Once the device is saturated ($T{=}64$),
completions slow down (in our traces the median read completes in
$186$\,\textmu{}s), %
so
fewer reads are issued during the scan under the bound $Q$,
more page candidates are left  when it ends,
and the drain takes $21\%$ of the list-search time at
$k{=}10$ and even $73\%$ at $k{=}100$.

\noindent\textbf{\pipeline{Full.}}
This pipeline issues
fewer reads by tightening the drain's threshold.
Once the
last cluster is scanned (\stp{7} in Fig.~\ref{fig:overview}), a
pending candidate is issued only if it could still place among the
best $\alpha k$ of the current results, with $\alpha \in (0,1]$ and $\alpha k$ an integer.
This means its lower bound falls below the \emph{$\alpha$-quantile of the
KNNSet}.
$\alpha=1$ recovers $\mathrm{dist}_k$ and hence \pipeline{Scheduled}, and $\alpha < 1$
gives a tighter threshold. The tighter threshold applies only to reading decisions; harvested
pages are still re-screened against $\mathrm{dist}_k$.

The threshold comes with a guarantee:
with exact re-ranking and valid lower bounds,
every point among the true top-$\alpha k$ keeps
its lower bound below the $\alpha$-quantile threshold and is never pruned.
In this way, the search guarantees a recall of at least $\alpha$ over the points
the probed clusters contain.
In practice a far larger fraction of the true top-$k$ survives the $\alpha$-quantile.
We measured this in the tightest possible setting on
YFCC-100M, with the running KNNSet replaced by the true top-$k$ so
that the threshold is the $\alpha k$-th smallest true
distance: at $\alpha = 0.4$, about $85\%$ of the true top-$k$
neighbors have a lower bound below this threshold and would not be
pruned ($84.2\%$ at $k{=}10$, $85.1\%$ at $k{=}100$), against the
$40\%$ the guarantee promises, and we adopt $0.4$ as the default.

\noindent\textbf{Impact on recall and determinism.}
In the actual search,
the recall loss incurred by the $\alpha$-quantile threshold
is even smaller than suggested by this measurement, for three reasons: (1) the running KNNSet typically gives a looser threshold than the true top-$k$; (2) only the candidates still pending when the drain
begins face the $\alpha$-quantile threshold; and (3) pruning is per page candidate,
against the smallest lower bound among its members, so one strong
surviving point keeps its whole page.

We note that, with exact re-ranking and valid lower
bounds, I/O completion order does not affect the top-$k$ set
returned by \pipeline{Scheduled} ($\alpha=1$),
but can affect that returned by \pipeline{Full} ($\alpha<1$).
In \pipeline{Full}, depending on the order reads finish, re-ranking these candidates can tighten the
$\alpha$-quantile threshold sooner or later.
Thus, I/O completion order can affect which neighbors
are retained at ranks $(\alpha k,k]$.
Appendix~\ref{sec:app:determinism} shows
that the effect of I/O completion order on the returned
neighbor sets and recall is low.

The gain follows the drain. Below the IOPS ceiling ($T{=}1$), where
the drain is light, \pipeline{Full} matches \pipeline{Scheduled} at $k{=}10$ ($1.00\times$)
and gains $1.10\times$ at $k{=}100$. Once the device is saturated
($T{=}64$), \pipeline{Full} reads $131$ pages per query at $k{=}100$ where
\pipeline{Scheduled} reads $236$, the drain share falls from $73\%$ to $19\%$,
and throughput rises $1.8\times$.
As every query's drain issues fewer reads,
contention on the SSD decreases, allowing
scan-phase reads to complete sooner and the scan threshold
to tighten earlier,
so fewer points survive the scan (the daggered counts of surviving points
in Table~\ref{tab:ladder}: $13{,}738$ versus $17{,}040$ at
$k{=}100$), which leaves fewer pending candidates to drain.

%% file: content/5-experiment.tex
\begin{table}[!htbp]
  \centering
  \caption{Datasets. All vectors are \texttt{float32}
  except those in DINO-10B, which are \texttt{uint8};
  the columns give the modality, the dimensionality $D$,
  base count, the search metric and whether queries are distributed with the dataset.}
  \label{tab:datasets}
  \small
  \setlength{\tabcolsep}{4.5pt}
  \begin{tabular}{llrrcr}
    \toprule
    Dataset & Modality & $D$ & \#Base & Metric & queries? \\
    \midrule
    OpenAI-5M~\cite{openai2024embedding} & Text & 1536 & 5\,M & $\ell_2$ & no \\
    Wiki-10M~\cite{cuvs_wikiall} & Text & 768 & 10\,M & $\ell_2$ & yes \\
    DPR-100M~\cite{intel_dpr_dataset} & Text & 768 & 101.2\,M & IP & no \\
    LAION-100M~\cite{sisap2024laion100m} & Image & 768 & 102.1\,M & $\ell_2$ & yes \\
    YFCC-100M~\cite{bigann_yfcc_images} & Image & 1280 & 98.7\,M & $\ell_2$ & yes \\
    DataComp-1B~\cite{datacomp1B} & Image & 512 & 1\,B & $\ell_2$ & no\\
    DINO-10B~\cite{dino10b}
      & Image
      & 1024
      & 10\,B
      & $\ell_2$ & yes\\
    \bottomrule
  \end{tabular}
\end{table}

\begin{figure*}[!htbp]
  \centering
  \includegraphics[width=0.9\textwidth]{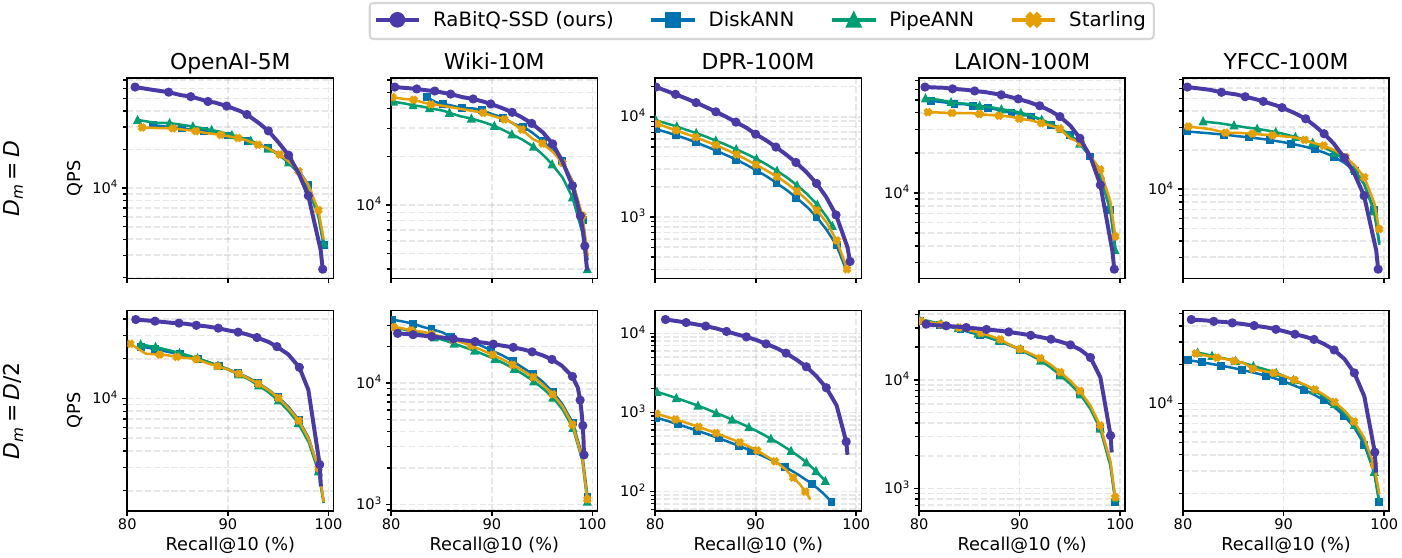}
  \caption{Main evaluation at $k = 10$ on the five datasets (columns):
  throughput against recall at $T = 64$ (log scale, up is better),
  under two memory budgets, $D_m = D$ (top row) and $D_m = D/2$
  (bottom row), with baselines using matched per-point
  memory budgets (\S\ref{sec:exp:setup}).}
  \label{fig:main}
\end{figure*}

\section{Experimental Study}
\label{sec:experiment}

\subsection{Experimental Setup}
\label{sec:exp:setup}

\newcommand{\rmax}{$r_\mathrm{max}$}

\noindent\textbf{Platform.}
All experiments run on a server with an AMD Ryzen Threadripper PRO
9975WX ($32$ cores, $64$ threads), $768$\,GB of DDR5 memory
($8\times96$\,GB, $8$ channels), and a Samsung PM9D3a $15.36$\,TB
NVMe SSD attached over PCIe~4.0~$\times$4, under Ubuntu 24.04
(Linux 6.17); all binaries are compiled with GCC 13.3 at
\texttt{-O3 -march=native}.
We measured that this configuration
sustains \rmax=$1.63$M random $4$\,KB reads per second.
We use this rate as the device reference in the I/O analysis.
All SSD accesses, ours and every
baseline's, go through \texttt{libaio} under
\texttt{O\_DIRECT}~\cite{open2manpage}, bypassing the page cache.
Each query vector is run on one thread, and the number  $T\le 64$ of concurrent  threads depends on the experiment.

\smallskip\noindent\textbf{Datasets and queries.}
Table~\ref{tab:datasets} lists the seven datasets used
in our evaluation.\footnote{All data access and
experiments were performed by Nanyang Technological University.}
The first six datasets are evaluated with
$10{,}000$ queries.
For datasets that do not provide a dedicated query set
we sample queries disjoint from the base.
Ground truth is computed by exhaustive search.
DINO-10B and its evaluation setup are described
separately in \S\ref{sec:exp:scale}.

\smallskip\noindent\textbf{Our configuration.}
Unless stated otherwise, \OURS uses
$B = 9$ bits per dimension (i.e., $8$-bit ex-codes)
and $C = \nlist = N/1024$ clusters.
Coarse quantization uses IRQ (\S\ref{sec:pipe:coarse}),
which encodes the $C$ centroids at $9$ bits per
dimension and organizes them into $C/1000$ inverted lists.
It probes $\gamma \cdot \nprobe$ of these lists
with $\gamma = 0.5$.
The drain threshold is $\alpha = 0.4$ (\S\ref{sec:pipe:alpha});
the per-thread in-flight I/O budget is $Q = 16$ pages for $T \le 16$ and $Q = 8$
for $T > 16$ (\S\ref{sec:pipe:invlist}).
We conduct the parameter study on the in-flight budget $Q$ and the drain threshold $\alpha$ in Appendix~\ref{sec:exp:param}.

\smallskip\noindent\textbf{Baseline methods.}
In the main evaluation, we compare against three graph-based SSD-resident systems:
DiskANN~\cite{subramanya2019diskann,diskannrepo},
Starling~\cite{wang2024starling,starlingrepo}, and
PipeANN~\cite{guo2025pipeann,pipeannrepo}.
We follow prior work~\cite{chen2026alayalaser,guo2025pipeann}
in configuring these graph baselines and use their default
on-disk representations, which store full-precision vectors.
Each disk graph is built with maximum out-degree $R = 64$ and
construction list size
$L = 200$; Starling and PipeANN build their in-memory navigation
graphs from $1\%$ data samples; DiskANN's node cache is provisioned
with $1\%$ of the raw dataset size; and the search beam widths are $8$,
$8$, and $32$ for DiskANN, Starling, and PipeANN,
respectively.
We report comparisons with
SPANN~\cite{chen2021spann} and
AlayaLaser~\cite{chen2026alayalaser,alayaliterepo}
in \S\ref{sec:exp:spann}, together with their configurations.

\smallskip\noindent\textbf{Memory budgets.}
We consider two memory regimes: low memory sets $D_m=D/2$ and high memory is $D_m=D$.
We equalize the per-vector in-memory storage across
\OURS and the three graph baselines. For each setting of $D_m$, the
per-point PQ bytes of DiskANN, Starling, and PipeANN equal the
per-point size of our in-memory prefix code and factors,
i.e., $D_m/8+12$ bytes.
All indexes are built with the same temporary memory budget: $128$\,GB per dataset, and $512$\,GB for DataComp-1B.

\smallskip\noindent\textbf{Metrics.}
We report \textbf{recall@$k$}, per-query latency at $T = 1$
and throughput (queries per second, \textbf{QPS}) for $T>1$.
When QPS is indicated in plots, it is reported in log scale.
To quantify SSD traffic, we also report the mean number
of $4$\,KB \textbf{pages read per query}.
We aggregate this number into the \textbf{read rate} (pages / s), obtained by multiplying the QPS with the average number of page reads per query.
The read rate is capped by \rmax.
Curves sweep each method's accuracy parameter ($\nprobe$ for \OURS,
the search list size for the graph baselines).
For comparisons at fixed recall, we pick the smallest setting of $\nprobe$ or search list size that achieves this recall.

\subsection{Main Evaluation}
\label{sec:exp:main}

Figure~\ref{fig:main} shows that \OURS achieves
the highest throughput on all five datasets at recalls $0.90$
and $0.95$ under both memory budgets.
\OURS's advantage is strongest for the limited memory budget ($D_m=D/2$) at recall $0.95$, where the speedup is $1.64$--$2.52\times$ w.r.t. the best baselines, and even $16.18\times$ for DPR-100M.

\noindent {\bf Page reads.}
Table~\ref{tab:io} reports the pages read per query.
\OURS reads fewer pages per query than every baseline.
The lowest baseline page count is still $1.1$--$3.8\times$ ours.
We combine in-memory pruning and page-level
I/O scheduling with compact on-disk records, which reduce the number of page reads issued per query and increase the observed throughput.
On the four
datasets other than DPR-100M, and for all methods, the read rate is
$1.4$--$1.7$M/s,
close to \rmax. %
When the device is saturated,
fewer pages per query translate
into higher throughput:
on YFCC-100M, our $37.8$ pages against
PipeANN's $62.2$ accompany a $1.58\times$ throughput advantage.

\begin{table}[!htbp]
\caption{4\,KB pages read per query at recall $0.90$ ($k = 10$,
$T = 64$, $D_m = D$). In bold: the lowest (best) per column.}
\label{tab:io}
\centering
\small
\setlength{\tabcolsep}{3.5pt}
\begin{tabular}{@{}lrrrrr@{}}
\toprule
 & OpenAI & Wiki & DPR & LAION & YFCC \\
\midrule
\OURS & \textbf{37.5} & \textbf{37.9} & \textbf{112.6} & \textbf{32.3} & \textbf{37.8} \\
DiskANN    & 64.1 & 41.2 & 555.7 & 40.4 & 72.3 \\
PipeANN    & 62.3 & 51.5 & 426.5 & 38.9 & 62.2 \\
Starling   & 63.3 & 43.0 & 489.3 & 38.0 & 63.1 \\
\bottomrule
\end{tabular}
\end{table}

\begin{figure*}[!b]
  \centering
  \includegraphics[width=0.9\textwidth]{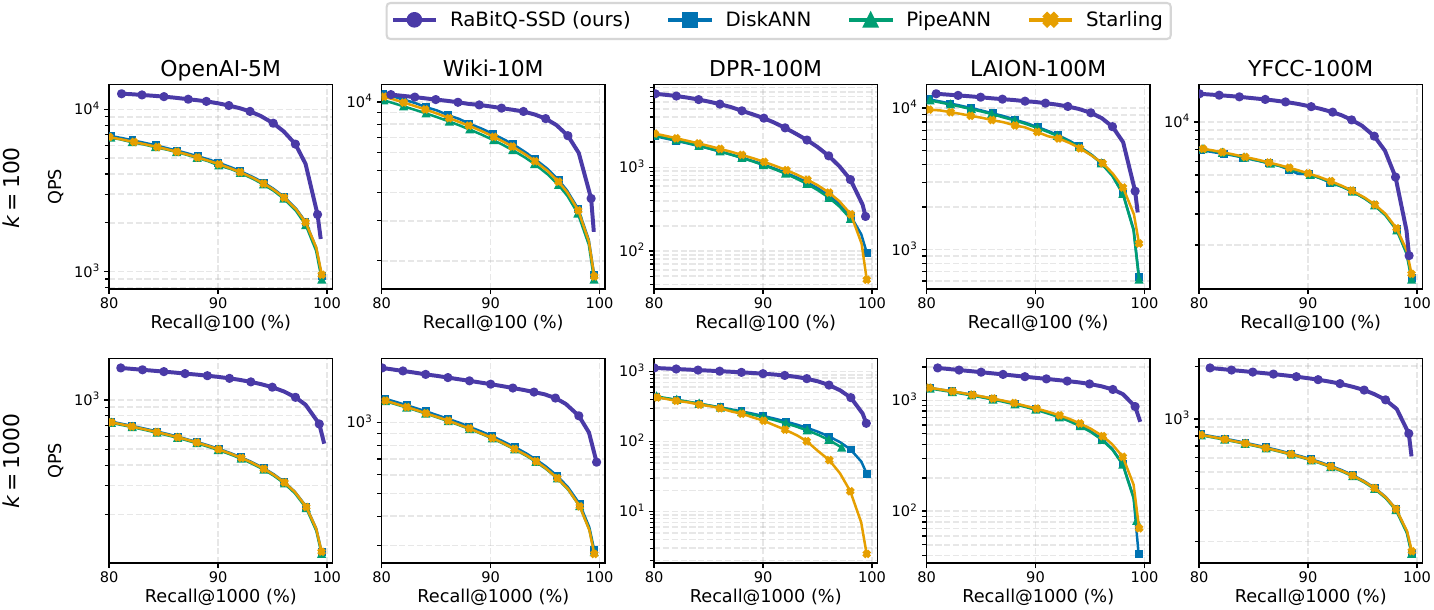}
  \caption{Throughput against recall on the five datasets
  (columns) at $T=64$ and $D_m=D$, for $k=100$ (top row)
  and $k=1000$ (bottom row). The QPS axes use a log scale.}
  \label{fig:topk}
  \Description{QPS versus recall for \OURS, DiskANN, PipeANN,
  and Starling. Columns show OpenAI-5M, Wiki-10M, DPR-100M,
  LAION-100M, and YFCC-100M. The top row shows k=100 and the
  bottom row shows k=1000. QPS axes are logarithmic.}
\end{figure*}

\begin{table*}[!t]
\caption{Per-query latency (ms) at $T = 1$: P50\,/\,P95\,/\,P99 at
recall $0.90$, $k = 10$, under both memory budgets.
}
\label{tab:latency}
\centering
\small
\setlength{\tabcolsep}{3.2pt}
\begin{tabular}{@{}cl ccccc@{}}
\toprule
 & & OpenAI-5M & Wiki-10M & DPR-100M & LAION-100M & YFCC-100M \\
\midrule
\multirow{4}{*}{$D_m = D$} & \OURS & 0.61\,/\,\textbf{0.79}\,/\,\textbf{0.86} & \textbf{0.64}\,/\,\textbf{0.80}\,/\,\textbf{0.87} & 3.79\,/\,4.27\,/\,4.52 & \textbf{0.53}\,/\,\textbf{0.69}\,/\,\textbf{0.76} & \textbf{0.63}\,/\,\textbf{0.79}\,/\,\textbf{0.87} \\
 & DiskANN    & 0.80\,/\,1.05\,/\,1.22 & 0.81\,/\,1.09\,/\,1.26 & 8.72\,/\,9.47\,/\,10.5 & 0.76\,/\,1.05\,/\,1.25 & 0.87\,/\,1.22\,/\,1.42 \\
 & PipeANN    & \textbf{0.60}\,/\,0.82\,/\,0.96 & \textbf{0.64}\,/\,0.95\,/\,1.15 & \textbf{2.63}\,/\,\textbf{3.02}\,/\,\textbf{3.25} & 0.62\,/\,0.89\,/\,1.12 & 0.67\,/\,0.94\,/\,1.11 \\
 & Starling   & 0.91\,/\,1.22\,/\,1.42 & 0.83\,/\,1.10\,/\,1.29 & 7.71\,/\,8.45\,/\,9.14 & 0.82\,/\,1.13\,/\,1.42 & 0.97\,/\,1.34\,/\,1.56 \\
\midrule
\multirow{4}{*}{$D_m = D/2$} & \OURS & \textbf{0.66}\,/\,\textbf{0.89}\,/\,\textbf{1.04} & \textbf{0.76}\,/\,\textbf{1.01}\,/\,\textbf{1.27} & \textbf{3.80}\,/\,\textbf{5.23}\,/\,\textbf{5.93} & \textbf{0.62}\,/\,\textbf{1.00}\,/\,1.53 & \textbf{0.68}\,/\,\textbf{0.90}\,/\,\textbf{1.00} \\
 & DiskANN    & 0.98\,/\,1.23\,/\,1.40 & 1.40\,/\,1.67\,/\,1.87 & 78.6\,/\,83.4\,/\,86.4 & 1.34\,/\,1.64\,/\,1.89 & 1.09\,/\,1.43\,/\,1.64 \\
 & PipeANN    & 0.70\,/\,0.92\,/\,1.07 & 0.85\,/\,1.12\,/\,1.32 & 17.0\,/\,18.9\,/\,20.2 & 0.82\,/\,1.11\,/\,\textbf{1.37} & 0.78\,/\,1.06\,/\,1.25 \\
 & Starling   & 1.12\,/\,1.42\,/\,1.62 & 1.51\,/\,1.77\,/\,1.97 & 97.9\,/\,103.2\,/\,105.8 & 1.39\,/\,1.74\,/\,2.05 & 1.23\,/\,1.61\,/\,1.86 \\
\bottomrule
\end{tabular}
\end{table*}

\noindent {\bf High recall operating points.}
At recall $0.98$ and when
$D_m=D$, the fastest graph baseline overtakes
\OURS on each of the four datasets other than DPR-100M .
In this regime, fewer page reads do \emph{not} always
yield higher throughput. On Wiki-10M, \OURS \emph{still} reads fewer pages per query than
DiskANN ($73.2$ versus $108.7$), and sustains a lower page-read
rate ($0.97$M versus $1.54$M per second).
However, reaching this recall requires a larger $\nprobe$,
increasing in-memory scanning work, so computation becomes the limiting factor for throughput. %

\noindent {\bf Smaller memory budget.}
At $D_m=D/2$ and recall $0.98$, however, \OURS achieves
$2.11$--$2.80\times$ the throughput of the graph baselines
on the four datasets other than DPR-100M.
Reducing $D_m$ loosens the lower bounds, allowing
more candidates to survive pruning.
On Wiki-10M, \OURS reads $141.7$ pages per query and
sustains $1.61$M page reads per second, close to the device limit.
The increased I/O limits throughput to about $11.4$K QPS,
below the $13.3$K achieved at $D_m=D$.
Similarly, for the graph baselines, shorter in-memory PQ codes
yield less accurate distance estimates,
requiring more graph exploration and page reads to reach
the same recall:
DiskANN reads $329.3$ pages per query and achieves only
$4.8$K QPS, leaving \OURS with a $2.37\times$ advantage.

\noindent {\bf The DPR-100M outlier.}
The larger throughput advantage on DPR-100M at
the smaller memory budget reflects a particularly large
increase in the graph baselines' I/O cost.
At recall $0.90$, shortening their in-memory PQ codes
from $108$ to $60$ bytes increases page reads per query
by $6.5$--$9.6\times$.
Their page-read rates remain close to $1.6$M per second,
so throughput falls by $6.5$--$9.9\times$.
In comparison, \OURS resists better: page reads per query increase
by a factor of only $1.73$ (from $112.6$ to $194.8$).
The shorter prefix also reduces FastScan work, allowing
throughput to increase by $24\%$ despite the additional I/O
(\S\ref{sec:exp:memdim}).
Consequently, \OURS's throughput advantage over the fastest
baseline increases from $1.74\times$ at $D_m=D$ to
$14\times$ at $D_m=D/2$.

\noindent {\bf Latency variance.} Tail latency matters at production
request volumes, as even a small fraction of slow requests translates
to a large absolute number.
Table~\ref{tab:latency} reports P50, P95, and P99 per-query
latencies (i.e., the 50\%, 95\% and 99\% latency quantiles) at recall $0.90$ with $T=1$ under both memory budgets.
\OURS also achieves the lowest P95 and P99 latencies in most settings.

\subsection{Performance under Varying Operating Conditions}
\label{sec:exp:conditions}

\smallskip\noindent\textbf{Increasing $k$.}
Figure~\ref{fig:topk} extends the throughput comparison
to $k=100$ and $k=1000$ on all five datasets.
\OURS maintains higher throughput than the baselines.
At recall $0.90$, its advantage over the fastest baseline
increases with $k$ on every dataset, reaching
$1.31$--$3.31\times$ at $k=100$ and
$1.65$--$4.01\times$ at $k=1000$.
Raising $k$ increases the computations and SSD page reads required to
retrieve more neighbors: for \OURS, the larger distance
threshold allows more candidates to survive pruning, and graph
baselines need to explore more nodes to find a larger result set.
For example, on YFCC-100M at recall $0.90$,
\OURS reads $37.8$, $132$, and $957$ pages per query at
$k = 10$, $100$, and $1000$ while the
fastest baseline at each $k$ reads $62.2$, $326$, and $2{,}803$.
Since in this regime, the SSD throughput saturates, this translates to a larger throughput advantage for \OURS: from $1.58\times$ at $k = 10$ to
$2.42\times$ at $k = 100$ and then to $2.86\times$ at $k = 1000$.
\OURS benefits from the IVF data layout:
the candidates to be fetched
are more likely to be co-located in the same SSD page than with graph based methods.

\smallskip\noindent\textbf{Query concurrency.}
An ANN service must use its available SSD read
capacity efficiently, particularly when it shares the device with
other service instances on the same machine. We examine how
throughput changes as increasing query concurrency drives the
SSD from low utilization toward saturation.

Figure~\ref{fig:iobudget}(a) reports the throughput against the read rate.
At low query concurrency, the achieved read rates
are well below \rmax. More concurrent
queries can sustain more outstanding I/O requests, better
exploiting the SSD's internal parallelism and raising both its
utilization and query throughput.
Once
the device is saturated,
additional query concurrency yields little further increase in
the SSD read rate.
Query throughput is then limited by \rmax divided by the number of pages read per query.
Additional query concurrency can increase I/O
queuing delays and hence per-query latency.
At $T=64$, all four methods sustain $1.63$--$1.67$M page reads
per second, close to \rmax.
\OURS reads about $132$ pages per query against the
baselines' $326$--$329$, accounting for its $12{,}371$ QPS
against their $5{,}049$--$5{,}118$.
The graph baselines approach saturation at lower
query concurrency, so most points cluster near the right
end of the curves.
For these baselines, increasing $T$ from $32$ to $64$
hardly increases throughput, while mean query latency
approximately doubles, from $6.3$--$6.4$\,ms to
$12.4$--$12.6$\,ms.

\begin{figure}[t]
  \centering
  \includegraphics[width=0.9\linewidth]{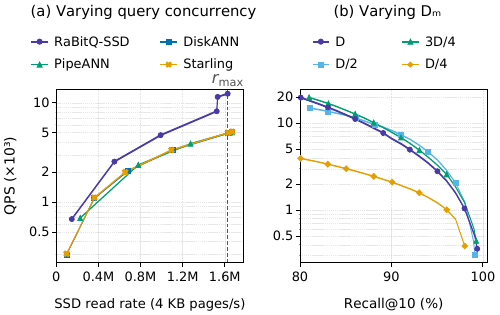}
  \caption{
  Throughput (QPS, log scale) under varying operating conditions.
\newline
{(a) Varying query concurrency} on YFCC-100M
($D_m=D$, $k=100$, recall $0.90$).
Curves are obtained by sweeping $T\in\{1,4,8,16,32,64\}$.
Vertical line = maximal SSD read rate \rmax.
\newline
(b) Varying $D_m$ %
on DPR-100M
($D=768$, $k=10$, $T=64$)
}
  \label{fig:io-memdim}
  \label{fig:iobudget}
  \label{fig:memdim}
\end{figure}

\smallskip\noindent\textbf{Varying the in-memory size.}
\label{sec:exp:memdim}
Figure~\ref{fig:memdim}(b) compares Split-RaBitQ with
$D_m \in \{D/4,\, D/2,\, 3D/4,\, D\}$. %
The per-point in-memory footprint is $D_m/8 + 12$ bytes;
reducing $D_m$ from $D$ to $D/4$ shrinks it
from $108$ to $36$ bytes ($3.0\times$), and moves more data to SSD.
This cuts the FastScan work per candidate but loosens the bound
(width${}\propto 1/\sqrt{D_m}$, \S\ref{sec:split:lb}),
increasing SSD reads and re-ranking work for the same recall.

On DPR-100M at recall $0.90$, search at $D_m=D$
is compute-bound, with an SSD read rate of only $0.75$M pages
per second.
Reducing $D_m$ relieves this computational
bottleneck. Although the looser bound increases reads per query,
the SSD can accommodate the additional I/O,
allowing throughput to improve. At $3D/4$, the read rate is still
only $1.13$M pages per second; at $D/2$, it reaches $1.62$M,
close to \rmax.
Throughput at $D/2$ reaches $8{,}298$ QPS, $24\%$ above $D$
and the highest among the four settings, despite the increase
from $113$ pages per query at $D_m=D$ to $195$ at $D_m=D/2$.
With the SSD near saturation at $D_m=D/2$, further shortening FastScan
provides little benefit: reducing $D_m$ to $D/4$ increases the
pages to read to $769$, so the throughput consequently falls to
$2{,}117$ QPS.

 \subsection{Scaling to Billions of Points}
\label{sec:exp:scale}

We evaluate \OURS's behavior closer to the memory and disk capacity of our test machine.

\begin{figure}[h]
  \centering
  \includegraphics[width=\linewidth]{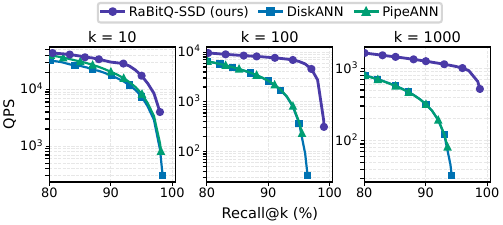}
  \caption{DataComp-1B: throughput against recall at $T = 64$ and
  $D_m = D = 512$, for varying $k$.}
  \label{fig:billion}
\end{figure}

\smallskip\noindent\textbf{1B scale.}
\label{sec:exp:billion}
We evaluate this scale on DataComp-1B.
We set $D_m = D = 512$, so
the in-memory codes and factors occupy
$76$ bytes per point.
We compare against DiskANN and PipeANN with per-point PQ size
matched at the same $76$ bytes.
Figure~\ref{fig:billion} shows the comparison at $T = 64$. %
\OURS retains its throughput advantage at this scale.
At recall $0.90$, its throughput is $1.48\times$, $2.96\times$,
and $3.9\times$ that of the fastest baseline for $k=10$, $100$,
and $1000$, respectively.
As in \S\ref{sec:exp:conditions}, the advantage increases with $k$.

\smallskip\noindent\textbf{10B scale.}
\label{sec:exp:tenb}
DINO-10B~\cite{dino10b} holds $10^{10}$ vectors of $1024$ dimensions
in \texttt{uint8} under $\ell_2$, and comes with $100{,}000$ queries
and ground truth at $k = 10$, on which we evaluate at $T = 64$.
With $8$-byte point IDs, the full-prefix codes, factors,
and IDs alone require $1.48$\,TB, exceeding our server's $768$\,GB
of DRAM. Setting $D_m=256$ reduces the resident index to
$545$\,GB including the coarse quantizer (with $C \approx N/4096$ clusters).
We store the raw uint8 vectors on SSD instead of
ex-codes for re-ranking, occupying $10.24$\,TB.
We do not evaluate DiskANN or PipeANN at this scale, as
10B nodes cannot be represented in their 32-bit internal node
IDs~\cite{diskannrepo,pipeannrepo}.
An extension to 64-bit IDs, retaining
their existing disk layouts,
would require an estimated $20.48$\,TB for the main disk
index alone, exceeding our SSD's capacity.
The bulkiness of the edge storage is a fundamental limitation of graph based methods.

\begin{figure}[h]
  \centering
  \begin{tabular}{@{}c@{\hspace{3.8pt}}c@{}}
  (a) QPS vs. recall & (b) scaling behavior \\
\adjustbox{valign=t}{\includegraphics[width=0.45\linewidth,trim={0mm 0mm 0mm 7mm},clip]{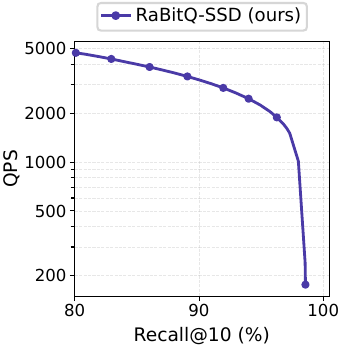}} &
\adjustbox{valign=t}{\scalebox{0.9}{
  \setlength{\tabcolsep}{4pt}
  \setlength{\tabcolsep}{3pt}%
\renewcommand{\arraystretch}{0.85}%
  \begin{tabular}{@{}lrrrr@{}}
    \toprule
     & \multicolumn{2}{c}{Recall $0.90$}
     & \multicolumn{2}{c}{Recall $0.95$} \\
    \cmidrule(lr){2-3}\cmidrule(l){4-5}
    Size & QPS & p/q & QPS & p/q \\
    \midrule
    $100$k & $13{,}501$ & $119.8$ & $10{,}978$ & $147.4$ \\
    $1$M   & $11{,}074$ & $147.7$ & $8{,}759$ & $186.6$ \\
    $10$M  & $7{,}875$  & $205.6$ & $5{,}895$ & $274.5$ \\
    $100$M & $5{,}795$  & $282.2$ & $4{,}197$ & $389.5$ \\
    $1$B   & $4{,}289$  & $381.2$ & $3{,}004$ & $544.2$ \\
    $10$B  & $3{,}210$  & $509.4$ & $2{,}226$ & $734.8$ \\
    \bottomrule
  \end{tabular}}}
  \\
  \end{tabular}
  \caption{
  \OURS on DINO-10B, $D_m=D/4=256$, $k=10$, and $T=64$.
  (a) throughput against recall
  (b) throughput and pages per query on subsets of the dataset.
  }
  \label{fig:tenb}
  \label{tab:dino-scalability}
\end{figure}

Figure~\ref{fig:tenb}(a) shows throughput against recall
on DINO-10B. \OURS achieves approximately
$3{,}210$ and $2{,}226$ QPS at recall $0.90$ and $0.95$,
respectively.
To examine how performance scales with dataset size,
we also evaluate subsets containing the first $100$k, $1$M, $10$M,
$100$M, and $1$B vectors,
still with $D_m=256$ and $C\approx N/4096$.
Figure~\ref{tab:dino-scalability}(b) and Figure~\ref{fig:dino-scaling-splash} summarize the results.
As the dataset grows from $100$k to $10$B points, a
$100{,}000$-fold increase in size, throughput decreases by just
of $4.21\times$ and $4.93\times$ at recalls $0.90$ and $0.95$, respectively.

With $D_m=D/4$, the shorter in-memory code prefix
loosens the lower bounds, allowing more candidates to survive
pruning and increasing SSD reads.
Across all six scales, the achieved read rates are about
$1.55$--$1.64$M pages per second, close to \rmax, indicating
I/O-bound search.
This is consistent with the tradeoff observed in the
varying-$D_m$ experiment (\S\ref{sec:exp:memdim}).

\subsection{Ablation Study}
\label{sec:exp:ablation}
The ablation varies three design choices: (1) which code is
held in memory, (2) which representation is stored on the SSD, and (3)
the precision used for re-ranking.
Each variant changes one component while keeping the other
index and system settings fixed.
Table~\ref{tab:ablation} reports QPS and pages read per query
at recall $0.90$ ($k=10$, $T=64$, $D_m=D$).
The results at other recalls are presented in Appendix~\ref{sec:app:ablation}.

\begin{table}[!htbp]
\caption{
Ablation of the in-memory code, on-disk representation, and re-ranking precision,
with recall $0.90$, $k=10$, $T=64$, $D_m=D$;
bits/dim = bits per dimension on disk, p/q = pages read per query. }
\label{tab:ablation}
\centering
\small
\setlength{\tabcolsep}{4pt}
\begin{tabular}{@{}lrrrrr@{}}
\toprule
& & \multicolumn{2}{c}{YFCC-100M} & \multicolumn{2}{c}{LAION-100M} \\
\cmidrule(lr){3-4}\cmidrule(l){5-6}
Variant & bits/dim & QPS & p/q & QPS & p/q \\
\midrule
\OURS ($B=9$) & 8 & \textbf{42,057} & 37.8 & \textbf{50,509} & 32.3 \\
\OURS-fp & 32 & 24,952 & 61.7 & 43,805 & 36.8 \\
\OURS (SQ8) & 8 & 41,659 & 37.5 & 50,051 & 32.6 \\
\OURS ($B=5$) & 4 & 40,688 & 38.3 & 48,099 & 33.2 \\
\OURS (SQ4) & 4 & 30,824 & 48.3 & 38,178 & 42.7 \\
\midrule
PipeANN (PQ) & 32 & 26,701 & 62.2 & 40,233 & 38.9 \\
PipeANN (RaBitQ) & 32 & 25,591 & 64.9 & 38,811 & 40.4 \\
\midrule
DiskANN  & 32 & 23,017 & 72.3 & 39,366 & 40.4 \\
DiskANN (disk-PQ) & 8 & 37,580 & 36.5 & 37,102 & 41.2 \\
\bottomrule
\end{tabular}%
\end{table}

\smallskip\noindent\textbf{The in-memory code.}
We evaluate several combinations of in-memory representations for the different methods. %
PipeANN supports both PQ and RaBitQ codes~\cite{pipeannrepo}, allowing to compare them directly.
Running it with $1$-bit RaBitQ codes
instead of PQ codes, with its disk graph and layout
unchanged, lowers throughput by $\sim4\%$ on both datasets:
merely replacing the in-memory code for navigation
offers no throughput benefit.
The superior results of \OURS come from the bound provided by Split-RaBitQ (\S\ref{sec:split:lb})
to prune candidates. %

\smallskip\noindent\textbf{The on-disk representation.}
\emph{\OURS-fp} stores the raw fp32 vector in place of our ex-codes, to match
the default storage of the baselines,
whereas \emph{DiskANN (disk-PQ)} replaces the
full-precision vector inside each graph node with a PQ code
of one byte per dimension.
On YFCC-100M, a compressed on-disk representation reduces the
pages needed per record from two to one, lowering page reads per query.
This improves throughput by $60$+\% for both DiskANN
and \OURS relative to their fp32 variants,
and \OURS still achieves $12\%$ higher
throughput than DiskANN (disk-PQ).

On LAION-100M, fp32 records already fit in one page, so
this effect does not apply: at recall $0.90$, DiskANN (disk-PQ) achieves $6\%$ lower
throughput than fp32 DiskANN.
This, in turn, stems from the quantization error introduced
by disk-PQ, which necessitates additional search work to reach
the same recall.
However, in \OURS, compact ex-codes stored contiguously
within inverted lists allow more surviving candidates to share a page read, which
yields $15\%$ higher
throughput over \OURS-fp.

\smallskip\noindent\textbf{Re-ranking precision.}
With SQ8 and SQ4, we replace ex-codes with $8$-bit or $4$-bit scalar
quantization with residual encoding~\cite{douze2026faisslib} on both datasets.
At recall $0.90$, $B=9$ and SQ8 have similar throughput,
whereas $B=5$ achieves $1.32\times$ and $1.26\times$ the throughput
of SQ4 on YFCC-100M and LAION-100M, respectively.
SQ4 requires more probing and page reads to reach the same recall.
This shows that \OURS combines the 1-bit
RaBitQ code and the 4-bit ex-code to produce a more accurate representation than just 4-bit SQ4.

\subsection{Additional baselines}
\label{sec:exp:spann}

\smallskip\noindent\textbf{SPANN.}
SPANN partitions the base into a large number of inverted lists,
keeps their centroids in memory, and reads the probed lists from
SSD~\cite{chen2021spann}.
Figure~\ref{fig:spann}(a)
reports the comparison
on Wiki-10M. %
Memory budgets are not matched: SPANN's
in-memory index takes $6.6$\,GB against our $1.1$\,GB.
At recall $0.90$, \OURS delivers $21.3\times$ SPANN's throughput, because
SPANN reads $870$ pages per query, compared to $\sim 38$ for
\OURS.
SPANN fetches whole selected posting lists, while
\OURS performs finer-grained SSD accesses.

\smallskip\noindent\textbf{AlayaLaser.}
AlayaLaser~\cite{chen2026alayalaser} is a graph-based index that accelerates distance
estimation using PCA and a SIMD-friendly layout that stores neighbors'
quantized codes in each node's on-disk record.
It caches node records in memory.
We use the graph parameters specified
in \S\ref{sec:exp:setup}, with beam width $16$ and PCA dimension $256$.
Therefore, for AlayaLaser, there is a throughput against memory usage tradeoff, which Figure~\ref{fig:alayalaser}(b) shows.
Under the memory budget matched to our
$D_m = D$ configuration, AlayaLaser caches about $2.8\%$ of the nodes.
AlayaLaser achieves $8{,}592$ QPS, compared
with $28{,}571$ QPS for \OURS. Limited cache coverage leaves
many node accesses to SSD, with each cache miss
reading a two-page node record. %
To reach \OURS's QPS performance, AlayaLaser needs to cache $\sim 50\%$ of the nodes (30~GB of memory),
at which point the relevance of a hybrid memory-SSD setting becomes questionable.

\begin{figure}[t]
  \centering
  \includegraphics[width=0.9\linewidth]{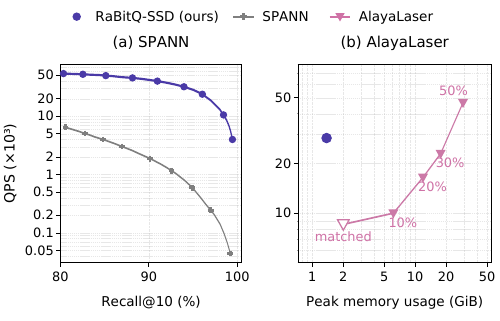}
  \caption{Comparisons on Wiki-10M ($T = 64$, $k = 10$;
  \OURS at $D_m = D$). (a) Throughput against recall; SPANN uses
  its default configuration. (b) Throughput against
  peak memory usage at recall
  $0.95$. ``matched'' denotes AlayaLaser's configuration under the
  matched memory budget; percentages denote cached nodes.}
  \label{fig:spann}
  \label{fig:alayalaser}
\end{figure}

\subsection{Indexing Cost and Footprints}
\label{sec:exp:cost}

Table~\ref{tab:cost} reports the end-to-end build time
and the on-SSD index size of \OURS and the three
main graph baselines on six datasets, at $D_m = D$.
All builds run on CPU only, with $64$ threads and the
build-memory budget of \S\ref{sec:exp:setup}.

\OURS builds indexes at least $5.0\times$ faster
than the fastest graph baseline on each dataset. On DataComp-1B,
it takes $9.2$ hours, compared with $45.6$ hours for DiskANN.
Our build has three stages: training the coarse quantizer,
assignment, and encoding. Training runs two-level $k$-means~\cite{faiss2level} on
a sample of the training vectors; assignment uses HNSW search
over the centroids rather than an exhaustive centroid scan for
each point. On DataComp-1B, the three stages take
$1.8$, $5.2$, and $2.2$ hours, respectively.

\OURS stores compact ex-codes and factors
on SSD, whereas the default graph indexes store
full-precision vectors and adjacency lists. Using compact codes
and omitting per-point adjacency lists reduces our on-SSD index
size by a factor of $4.0$--$7.0$
relative to these baselines. On DataComp-1B,
\OURS's on-SSD index occupies
$587$\,GB, compared with $4{,}096$\,GB
for DiskANN and PipeANN.
DiskANN's disk-PQ variant, which stores PQ codes on the SSD in
place of the vectors, narrows this storage gap to
$1.5$--$1.7\times$ on YFCC-100M and LAION-100M
(\S\ref{sec:exp:ablation}).

\begin{table}[!htbp]
\caption{End-to-end build time (hours) / on-SSD index
size (GB), at $D_m = D$. Results use each method's
default disk representation. ``--'': not built.}
\label{tab:cost}
\centering
\small
\setlength{\tabcolsep}{3pt}
\resizebox{\columnwidth}{!}{%
\begin{tabular}{@{}lrrrrrr@{}}
\toprule
dataset & OpenAI & Wiki & DPR & LAION & YFCC & DataComp \\
$N$ & 5M & 10M & 100M & 100M & 100M & 1B \\
\midrule
\textbf{\OURS}
& $\mathbf{0.05}/\mathbf{10.2}$
& $\mathbf{0.06}/\mathbf{8.2}$
& $\mathbf{0.9}/\mathbf{83.1}$
& $\mathbf{0.8}/\mathbf{83.8}$
& $\mathbf{1.5}/\mathbf{135}$
& $\mathbf{9.2}/\mathbf{587}$ \\
DiskANN
& $0.5/40.9$
& $0.5/41.0$
& $7.9/414$
& $5.5/418$
& $9.1/809$
& $45.6/4{,}096$ \\
PipeANN
& $0.4/40.9$
& $0.5/41.0$
& $8.8/414$
& $6.3/418$
& $9.3/809$
& $45.9/4{,}096$ \\
Starling
& $0.9/40.9$
& $1.0/41.0$
& $19.6/414$
& $12.1/418$
& $18.7/809$
& -- \\
\bottomrule
\end{tabular}%
}
\end{table}

%% file: content/6-related_work.tex
\section{Related Work}
\label{sec:related}

\noindent\textbf{SSD-resident vector search.}
Among graph-based methods,
DiskANN~\cite{subramanya2019diskann} sets the pattern most later
methods follow: compressed codes stay in memory to guide a best-first
traversal, while the graph itself resides on SSD.
Subsequent graph-based methods reorganize the on-disk layout so
that each fetched page serves more of the
search~\cite{wang2024starling,ni2023diskannpp,kang2025pageann,yue2025selectedges,yin2025gorgeous,bamg,govector}
and overlap the traversal with asynchronous
reads~\cite{guo2025pipeann,kang2026laann}.
AlayaLaser accelerates distance computation~\cite{chen2026alayalaser}.
A recent study explores this design space~\cite{li2026designspace}.
Among partitioning-based methods,
SPANN~\cite{chen2021spann} keeps centroids in memory and reads probed lists from SSD;
HELMSMAN~\cite{huang2026helmsman} rebuilds the I/O path on SPDK to
saturate its NVMe arrays and uses learned models to choose the
probed lists; and OrchANN~\cite{huan2025orchann} orchestrates out-of-core
search under skewed partitions.
Among other methods,
LM-DiskANN~\cite{lm_diskann} and AiSAQ~\cite{tatsuno2024aisaq}
move even the compressed codes to the SSD;
VeloANN~\cite{zhao2026veloann} stores RaBitQ ex-codes on the
SSD, with its in-memory $1$-bit code ordering the graph walk; and
SkipDisk~\cite{song2026skipdisk} prunes reads with a
triangle-inequality bound tightened by per-point pivots.
Furthermore, some disk-resident indexes have been extended to support
updates~\cite{singh2021freshdiskann,xu2023spfresh,guo2026odinann,song2026navis,zhong2025lsmvec}.
In \OURS, the in-memory prefix of the $1$-bit RaBitQ code gives every
point its own bound (\S\ref{sec:split}), and only the pages of
surviving candidates are read (\S\ref{sec:pipeline}).

\smallskip\noindent\textbf{Vector quantization.}
A substantial body of work has investigated the quantization of high-dimensional vectors across multiple fields, including machine learning, computer vision, and data management~\cite{jegou2011pq,additivePQ,ITQ,vaq,composite_quantization,ge2013optimized,lsq++,vaplusfile,splitvq,vafile,aguerrebere2023lvq,codesign2023,guo2020scann,vargaftik2021drive,eden,RaBitQ,ExtendedRaBitQ,turboquant}. We refer readers to existing surveys, experimental studies, and textbooks for more comprehensive overviews of related techniques~\cite{learningtohash,surveyl2hash,ite_matsui_2018_pq_survey,dataseries_benchmark,revision1_bound,revision2_bound,libertyvq}.
This paper builds on RaBitQ~\cite{RaBitQ}, which provides
probabilistic bounds on distance-estimation error, and its
extension supporting flexible bit budgets~\cite{ExtendedRaBitQ}
(\S\ref{sec:pre:rabitq}).
In particular, RaBitQ has been integrated with in-memory graph
indexes~\cite{symphonyqg2025}, made GPU-native~\cite{GPUIVFRaBitQ},
and run as a heterogeneous NPU--CPU
system~\cite{he2026ascendrabitq}.
Some recent quantization methods are built on top of RaBitQ:
MRQ~\cite{yang2026mrq} truncates PCA-projected dimensions and quantizes the kept dimensions with RaBitQ; and
SAQ~\cite{li2025saq} allocates bits across segments of PCA-projected dimensions non-uniformly and quantizes each segment with a variant of RaBitQ.
Split-RaBitQ instead derives an unbiased distance estimator and a probabilistic error
bound from a prefix of the 1-bit RaBitQ code.

%% file: content/7-conclusion.tex
\section{Conclusion}
\label{sec:conclusion}

We presented \OURS, a partitioning-based algorithm for SSD-resident vector search. Its \emph{Split-RaBitQ} design retains a configurable code prefix in memory, enabling fine-grained pruning under different memory budgets.
Its asynchronous pipeline overlaps computation with I/O
while reducing unnecessary SSD reads.
Experiments show that \OURS provides higher throughput, fewer SSD reads, smaller indexes, and faster construction than comparable graph-based systems from the state of the art. It also searches 10 billion vectors using a single server and one NVMe SSD, demonstrating its scalability.

%% file: content/appendix_proof.tex
\appendix
\section{Analysis of Split-RaBitQ}
\label{sec:app:proof}

\begin{figure}[!htbp]
  \centering
  \includegraphics[width=\linewidth]{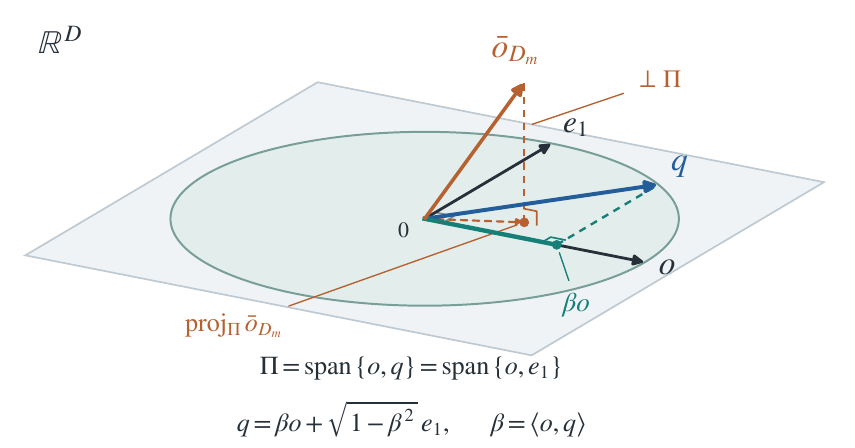}
  \caption{Geometry of Split-RaBitQ in $\mathbb{R}^{D}$.
  The shaded plane $\Pi=\operatorname{span}\{o,q\}$ contains the
  orthogonal unit vectors $o$ and $e_1$, and the unit query $q$;
  the ellipse depicts its unit circle.
  The prefix vector $\bar{o}_{D_m}$, of norm $\sqrt{D_m/D}$,
  is projected onto $\Pi$.
  Its component perpendicular to $\Pi$ contributes nothing to
  $\langle\bar{o}_{D_m},q\rangle$.}
  \label{fig:split-geometry}
\end{figure}

Throughout this appendix, all probabilities and expectations are over
the random orthogonal matrix $P$, which, as in~\cite{RaBitQTR}, is
generated by Gram--Schmidt orthonormalization of a matrix with
independent standard Gaussian entries; $P^{-1}o$ then follows the
uniform distribution on the unit sphere (Section~B.2
of~\cite{RaBitQTR}). The constants $c_0$ and $c$ are those in the
coordinate tail bound~(32) and the Lipschitz concentration
inequality~(41) of~\cite{RaBitQTR}, and, by the expression in
Lemma~B.3 of~\cite{RaBitQTR}, $\mathbb{E}\langle \bar{o}, o\rangle$
exceeds $0.79$ for every $D \ge 2$. Eq.~\eqref{eq:alignment}, i.e.,
(30) of~\cite{RaBitQTR}, applied to $\bar{x}$ and to $\bar{x}_{D_m}$
gives
\begin{equation}
\label{eq:app:l1}
\langle \bar{o}, o\rangle
= \frac{1}{\sqrt{D}}\sum_{i=1}^{D}\bigl\lvert (P^{-1}o)[i]\bigr\rvert ,
\qquad
\langle \bar{o}_{D_m}, o\rangle
= \frac{1}{\sqrt{D}}\sum_{i=1}^{D_m}\bigl\lvert (P^{-1}o)[i]\bigr\rvert ,
\end{equation}
since $\bar{x}[i] = \operatorname{sign}\bigl((P^{-1}o)[i]\bigr)/\sqrt{D}$
and $\bar{x}_{D_m}$ keeps the first $D_m$ of these entries. In
particular $\langle \bar{o}, o\rangle \ge 1/\sqrt{D}$, because the
$\ell_1$ norm of the unit vector $P^{-1}o$ is at least its $\ell_2$
norm.

\subsection{Joint Distribution of the Prefix's Components}

Lemma~B.3 of~\cite{RaBitQTR} states that the joint distribution of
$(\langle \bar{o}, o\rangle, \langle \bar{o}, e_1\rangle)$ is identical to
that of $(\langle \bar{o}, o\rangle, \sqrt{1 - \langle \bar{o}, o\rangle^2}\,X_1)$,
where $X_1$ follows the distribution $p_{D-1}$ of Lemma~B.1
in~\cite{RaBitQTR} and is independent of $\langle \bar{o}, o\rangle$.
The following lemma extends it to the prefix.

\begin{lemma}
\label{lem:app:joint}
Let $o$ and $e_1$ be two unit vectors with $o \perp e_1$, and let
$\bar{x}$, $\bar{o}$, $\bar{x}_{D_m}$, and $\bar{o}_{D_m}$ be as in
\S\ref{sec:split:twotier}. The joint distribution of
$(\langle \bar{o}, o\rangle,\, \langle \bar{o}_{D_m}, o\rangle,\,
  \langle \bar{o}_{D_m}, e_1\rangle)$ is identical to that of
\begin{equation}
\label{eq:app:joint}
\Bigl(\langle \bar{o}, o\rangle,\;
      \langle \bar{o}_{D_m}, o\rangle,\;
      \sqrt{D_m/D - \langle \bar{o}_{D_m}, o\rangle^2}\; X_1\Bigr) ,
\end{equation}
where $X_1$ follows the distribution $p_{D-1}$ of Lemma~B.1
in~\cite{RaBitQTR} and is independent of $P^{-1}o$, hence of the
first two components.
\end{lemma}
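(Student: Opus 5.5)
We prove Lemma~\ref{lem:app:joint} with the same rotation-invariance argument that underlies Lemma~B.3 of~\cite{RaBitQTR}, applied in the rotated coordinates. Write $u := P^{-1}o$ and $v := P^{-1}e_1$. Since $P$ is orthogonal, $\langle \bar{o}_{D_m}, o\rangle = \langle \bar{x}_{D_m}, u\rangle$ and $\langle \bar{o}_{D_m}, e_1\rangle = \langle \bar{x}_{D_m}, v\rangle$. Moreover, $\bar{x}$ and hence $\bar{x}_{D_m}$ are deterministic functions of $u$, namely coordinatewise signs scaled by $1/\sqrt{D}$. By Eq.~\eqref{eq:app:l1}, the first two components of the triple are functions of $u$ alone. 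The whole task therefore reduces to describing the conditional law of $v$ given $u$.

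\textbf{Step 1: conditional law of $v$.} For $P$ obtained by Gram--Schmidt on a Gaussian matrix, $P^{-1}$ is Haar distributed. Hence the pair $(u,v) = (P^{-1}o, P^{-1}e_1)$ has the same law as the image of a fixed orthonormal pair under a Haar rotation. Conditioned on $u$, the vector $v$ is uniformly distributed on the unit sphere of the $(D-1)$-dimensional subspace $u^\perp$. This is exactly the fact used in Lemma~B.3 of~\cite{RaBitQTR}: the stabilizer of $u$ acts transitively on that sphere.

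\textbf{Step 2: reduce to one projection.} Fix $u$ and decompose $\bar{x}_{D_m} = \langle \bar{x}_{D_m}, u\rangle u + w$, where $w \perp u$. Since $v \perp u$, we get $\langle \bar{x}_{D_m}, v\rangle = \langle w, v\rangle$. Using $\lVert \bar{x}_{D_m}\rVert^2 = D_m/D$, we have $\lVert w\rVert^2 = D_m/D - \langle \bar{x}_{D_m}, u\rangle^2$.

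\textbf{Step 3: identify the distribution.} If $w \neq 0$, then $\langle w/\lVert w\rVert, v\rangle$ is the inner product between a fixed unit vector in $u^\perp$ and a uniform random unit vector in that $(D-1)$-dimensional space. By Lemma~B.1 of~\cite{RaBitQTR}, this inner product has density $p_{D-1}$. Its conditional law is the same for every $u$, so it is a variable $X_1$ independent of $u$. If $w = 0$, which happens only when $D_m = D$ is not the issue but rather when $\bar{x}_{D_m}$ is parallel to $u$, then both sides are zero, so any independent $X_1$ works. We thus have, conditionally on $u$, $\langle \bar{o}_{D_m}, e_1\rangle = \sqrt{D_m/D - \langle \bar{o}_{D_m}, o\rangle^2}\,X_1$, with $X_1 \sim p_{D-1}$ independent of $u$. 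Integrating over $u$ then gives the joint law in Eq.~\eqref{eq:app:joint}.

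\textbf{Main obstacle.} The delicate point is Step~1, together with the independence claim. We must justify that the conditional law of $v$ given $u$ is uniform on the sphere of $u^\perp$, and not merely that $v$ is marginally uniform. I would handle this with the standard argument: for any orthogonal $R$ fixing $u$, the matrix $RP^{-1}$ has the same law as $P^{-1}$ by Haar invariance, and $R$ fixes $u$. Hence the conditional law of $v$ is invariant under the stabilizer of $u$, and the unique probability measure on that sphere with this invariance is the uniform one. Independence follows because the resulting conditional distribution of $X_1$, namely $p_{D-1}$, does not depend on $u$.
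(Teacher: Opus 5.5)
Your proof is correct and follows essentially the paper's route. You condition on $P^{-1}o$ (the paper's $p_1$), use that $P^{-1}e_1$ is then uniform on the unit sphere of its orthogonal complement, and split $\bar{x}_{D_m}$ into its component along $P^{-1}o$ plus a perpendicular part of squared norm $D_m/D - \langle \bar{x}_{D_m}, P^{-1}o\rangle^2$; this is exactly what the paper's alignment rotation $V$ and vector $v_2$ encode.

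The one difference is how conditional uniformity is justified. The paper reads it off the Gram--Schmidt generation $p_2 = (g - \langle g, p_1\rangle p_1)/\lVert g - \langle g, p_1\rangle p_1\rVert$, with $g$ independent of $p_1$. Your stabilizer argument is loose as phrased, because no fixed $R$ fixes the random $u$. It becomes rigorous via right invariance: $P^{-1}S$ has the same law as $P^{-1}$ for an independent Haar rotation $S$ fixing the deterministic vector $o$.
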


\begin{proof}
We follow the proof of Lemma~B.3 in~\cite{RaBitQTR} step by step,
with $\bar{x}_{D_m}$ in place of $\bar{x}$ where the two differ.

\emph{Reduction.} Let $u_1 := (1, 0, \dots, 0)$ and
$u_2 := (0, 1, 0, \dots, 0)$. As in~\cite{RaBitQTR}, since the
distribution of $UP$ is identical to that of $P$ for every orthogonal
matrix $U$, it suffices to prove the lemma for $o = u_1$ and
$e_1 = u_2$. Let $p_1$ and $p_2$ be the first two rows of $P$, as
in~\cite{RaBitQTR}; then $P^{-1}o = P^{\top}u_1 = p_1$ and
$P^{-1}e_1 = p_2$.

\emph{Deferred decision.} The rows of $P$ are generated one by one
((46) of~\cite{RaBitQTR}): $p_1$ is a normalized standard Gaussian
vector, and
$p_2 = (g - \langle g, p_1\rangle p_1)/\lVert g - \langle g, p_1\rangle p_1\rVert$
for a standard Gaussian vector $g$ independent of $p_1$. Once $p_1$
is sampled, $\bar{x} = \arg\max_{x \in \mathcal{C}}\langle x, p_1\rangle$
is fixed ((47)--(48) of~\cite{RaBitQTR}), and so is $\bar{x}_{D_m}$,
obtained from $\bar{x}$ by setting its coordinates beyond the first
$D_m$ to zero, an operation that does not involve $P$. Hence
$\langle \bar{o}, o\rangle = \langle \bar{x}, p_1\rangle$ and
$\langle \bar{o}_{D_m}, o\rangle = \langle \bar{x}_{D_m}, p_1\rangle$
are functions of $p_1$, whereas
$\langle \bar{o}_{D_m}, e_1\rangle
= \langle P\bar{x}_{D_m}, u_2\rangle
= \langle \bar{x}_{D_m}, p_2\rangle$ involves $p_2$ as well. It
remains to determine the conditional distribution of
$\langle \bar{x}_{D_m}, p_2\rangle$ given $p_1$.

\emph{Alignment.} Conditional on $p_1$, the vector $\bar{x}_{D_m}$ is
fixed, with $\lVert \bar{x}_{D_m}\rVert^2 = D_m/D$ (it has $D_m$
entries of magnitude $1/\sqrt{D}$). Let
\begin{equation*}
\begin{aligned}
v_1 &:= (1, 0, \dots, 0), \\
v_2 &:= \Bigl(\langle \bar{x}_{D_m}, p_1\rangle,\,
\sqrt{D_m/D - \langle \bar{x}_{D_m}, p_1\rangle^2},\, 0, \dots, 0\Bigr).
\end{aligned}
\end{equation*}
The pairs $(p_1, \bar{x}_{D_m})$ and $(v_1, v_2)$ have the same norms
and the same inner product, so there is an orthogonal matrix $V$ with
$Vp_1 = v_1$ and $V\bar{x}_{D_m} = v_2$. This is the alignment step
of~\cite{RaBitQTR}, where $\bar{x}$ has unit norm and the second entry
of $v_2$ is $\sqrt{1 - \langle \bar{x}, p_1\rangle^2}$; for the prefix
the norm is $\sqrt{D_m/D}$. Then, as in (49)--(51)
of~\cite{RaBitQTR},
\begin{align}
\langle \bar{x}_{D_m}, p_2\rangle
&= \Bigl\langle \bar{x}_{D_m},
   \frac{g - \langle g, p_1\rangle\, p_1}
        {\lVert g - \langle g, p_1\rangle\, p_1\rVert}\Bigr\rangle
\label{eq:app:align1}\\
&= \Bigl\langle V\bar{x}_{D_m},
   \frac{Vg - \langle Vg, Vp_1\rangle\, Vp_1}
        {\lVert Vg - \langle Vg, Vp_1\rangle\, Vp_1\rVert}\Bigr\rangle
\label{eq:app:align2}\\
&= \Bigl\langle v_2,
   \frac{g - \langle g, v_1\rangle\, v_1}
        {\lVert g - \langle g, v_1\rangle\, v_1\rVert}\Bigr\rangle ,
\label{eq:app:align3}
\end{align}
where \eqref{eq:app:align1} is the Gram--Schmidt step,
\eqref{eq:app:align2} holds because inner products and norms are
invariant under $V$, and \eqref{eq:app:align3} substitutes $Vp_1 = v_1$,
$V\bar{x}_{D_m} = v_2$, and $Vg$ by $g$, since the standard Gaussian
vector is rotation invariant and $g$ is independent of $p_1$. The
vector $g - \langle g, v_1\rangle v_1$ has first entry $0$ and its
remaining $D-1$ entries are independent standard Gaussians; after
normalization, the remaining $D-1$ entries follow the uniform
distribution on the unit sphere in the $(D-1)$-dimensional space and
are independent of $p_1$. Recall that
$v_2 = \bigl(\langle \bar{x}_{D_m}, p_1\rangle,
\sqrt{D_m/D - \langle \bar{x}_{D_m}, p_1\rangle^2}, 0, \dots, 0\bigr)$.
Thus, $\langle \bar{x}_{D_m}, p_2\rangle
= \sqrt{D_m/D - \langle \bar{x}_{D_m}, p_1\rangle^2}\cdot X_1$, where
$X_1$ follows the distribution $p_{D-1}$ and is independent of $p_1$.
Since $\langle \bar{x}_{D_m}, p_2\rangle = \langle \bar{o}_{D_m}, e_1\rangle$
and $\langle \bar{x}_{D_m}, p_1\rangle = \langle \bar{o}_{D_m}, o\rangle$,
this is the third component of Eq.~\eqref{eq:app:joint}, and the
independence of $X_1$ from $p_1$ gives its independence from the
first two components, which are functions of $p_1$.
\end{proof}

\subsection{Concentration of the Prefix's Share}

Lemma~B.2 of~\cite{RaBitQTR} states that for an $L$-Lipschitz
function $f$ on the unit sphere and a uniformly random unit vector $x$,
$\Pr\{\lvert f(x) - \mathbb{E} f(x)\rvert \ge t\}
\le 2\exp(-cDt^2/L^2)$ ((41) there), and applies it with
$f(x) = \lVert x\rVert_{1}/\sqrt{D}$ and $x = P^{-1}o$ to obtain, for
every $u > 0$,
\begin{equation}
\label{eq:app:B3}
\Pr\bigl\{\lvert \langle \bar{o}, o\rangle - \mathbb{E}\langle \bar{o}, o\rangle\rvert
\ge u/\sqrt{D}\bigr\} \le 2e^{-cu^2}
\end{equation}
((54) there). The same inequality controls the prefix's share.

\begin{lemma}
\label{lem:app:share}
For every $u > 0$,
\begin{equation*}
\Pr\Bigl\{\Bigl\lvert \langle \bar{o}_{D_m}, o\rangle
- \frac{D_m}{D}\,\mathbb{E}\langle \bar{o}, o\rangle\Bigr\rvert
\ge \frac{u\sqrt{D_m}}{D}\Bigr\} \le 2e^{-cu^2} .
\end{equation*}
\end{lemma}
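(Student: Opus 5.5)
We apply the Lipschitz concentration inequality (41) of~\cite{RaBitQTR} (Lemma~B.2 there) to the prefix $\ell_1$ functional. By Eq.~\eqref{eq:app:l1}, $\langle \bar{o}_{D_m}, o\rangle = f(x)$ with $x = P^{-1}o$ uniform on the unit sphere and
\begin{equation*}
f(x) := \frac{1}{\sqrt{D}}\sum_{i=1}^{D_m}\lvert x[i]\rvert .
\end{equation*}

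\emph{Lipschitz constant.} For any $x,y$, the reverse triangle inequality coordinatewise and Cauchy--Schwarz on $D_m$ terms give
\begin{equation*}
\lvert f(x)-f(y)\rvert \le \frac{1}{\sqrt{D}}\sum_{i\le D_m}\lvert x[i]-y[i]\rvert \le \sqrt{\frac{D_m}{D}}\,\lVert x-y\rVert .
\end{equation*}
So $f$ is $L$-Lipschitz with $L=\sqrt{D_m/D}$.

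\emph{Mean.} By exchangeability of the coordinates of the uniform vector $x$, each $\mathbb{E}\lvert x[i]\rvert$ is the same. Hence
\begin{equation*}
\mathbb{E} f(x) = \frac{D_m}{D}\cdot\frac{1}{\sqrt{D}}\sum_{i=1}^{D}\mathbb{E}\lvert x[i]\rvert = \frac{D_m}{D}\,\mathbb{E}\langle \bar{o}, o\rangle ,
\end{equation*}
which is exactly the centering term in the statement.

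\emph{Applying (41).} Set $t = u\sqrt{D_m}/D$. Then
\begin{equation*}
\frac{cDt^2}{L^2} = \frac{cD\,u^2 D_m/D^2}{D_m/D} = cu^2 ,
\end{equation*}
so (41) yields $\Pr\{\lvert f(x)-\mathbb{E}f(x)\rvert \ge u\sqrt{D_m}/D\}\le 2e^{-cu^2}$, which is the claim with the same constant $c$. As a sanity check, $D_m=D$ recovers Eq.~\eqref{eq:app:B3}.

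The only step needing care is the Lipschitz constant. Bounding $f$ crudely by the full $\ell_1$ norm gives $L=1$ and loses the $\sqrt{D_m}$ factor. The improvement comes from applying Cauchy--Schwarz only over the $D_m$ prefix coordinates.
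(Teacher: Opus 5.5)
Your proposal is correct and matches the paper's proof essentially step for step. Both use the same prefix $\ell_1$ functional $f$, the same exchangeability argument for the mean, the same Lipschitz constant $\sqrt{D_m/D}$ obtained by applying Cauchy--Schwarz over the $D_m$ prefix coordinates, and the same choice $t = u\sqrt{D_m}/D$ in the Lipschitz concentration inequality (41).
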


\begin{proof}
By Eq.~\eqref{eq:app:l1}, $\langle \bar{o}_{D_m}, o\rangle = f(P^{-1}o)$
with $f(x) := \frac{1}{\sqrt{D}}\sum_{i=1}^{D_m}\lvert x[i]\rvert$,
and $P^{-1}o$ is a uniformly random unit vector, so its coordinates
are exchangeable and $\mathbb{E}\lvert (P^{-1}o)[i]\rvert$ does not
depend on $i$. By Eq.~\eqref{eq:app:l1},
\begin{equation*}
\begin{aligned}
\mathbb{E} f(P^{-1}o)
&= \frac{D_m}{\sqrt{D}}\,\mathbb{E}\bigl\lvert (P^{-1}o)[1]\bigr\rvert \\
&= \frac{D_m}{D}\cdot
  \frac{1}{\sqrt{D}}\sum_{i=1}^{D}\mathbb{E}\bigl\lvert (P^{-1}o)[i]\bigr\rvert \\
&= \frac{D_m}{D}\,\mathbb{E}\langle \bar{o}, o\rangle .
\end{aligned}
\end{equation*}
For unit vectors $x$ and $y$,
$\lvert f(x) - f(y)\rvert
\le \frac{1}{\sqrt{D}}\sum_{i=1}^{D_m}\lvert x[i] - y[i]\rvert
\le \sqrt{D_m/D}\,\lVert x - y\rVert$
by the triangle inequality and the Cauchy--Schwarz inequality, so $f$
is $\sqrt{D_m/D}$-Lipschitz on the unit sphere. Lemma~B.2
of~\cite{RaBitQTR} with $L = \sqrt{D_m/D}$ and $t = u\sqrt{D_m}/D$
gives the bound, since $cDt^2/L^2 = cu^2$.
\end{proof}

\subsection{Proof of Theorem~\ref{thm:split-lb}}

Recall Eq.~\eqref{eq:split-decomp}: with $\beta = \langle o, q\rangle$
and $A_{D_m} = (D/D_m)\,\langle \bar{o}_{D_m}, o\rangle / \langle \bar{o}, o\rangle$,
the error of the estimator is
$\beta\,(A_{D_m} - 1)
+ \sqrt{1-\beta^2}\,(D/D_m)\,\langle \bar{o}_{D_m}, e_1\rangle / \langle \bar{o}, o\rangle$.
When $o$ and $q$ are collinear, only the first term is present.

\smallskip\noindent\emph{Unbiasedness.}
We show that both terms have mean zero. For the first term, by
Eq.~\eqref{eq:app:l1},
\begin{equation}
\label{eq:app:A}
A_{D_m}
= \frac{D}{D_m}\cdot
  \frac{\sum_{i=1}^{D_m}\lvert (P^{-1}o)[i]\rvert}
       {\sum_{j=1}^{D}\lvert (P^{-1}o)[j]\rvert} .
\end{equation}
The coordinates of the uniformly random unit vector $P^{-1}o$ are
exchangeable, hence so are the ratios
$\lvert (P^{-1}o)[i]\rvert / \sum_{j}\lvert (P^{-1}o)[j]\rvert$,
$i = 1, \dots, D$, which sum to $1$; each ratio therefore has
expectation $1/D$, the sum of the first $D_m$ of them has expectation
$D_m/D$, and $\mathbb{E}[A_{D_m}] = 1$. (This uses
that the prefix is a fixed set of coordinates, chosen without
reference to $P$.) For the second term, Lemma~\ref{lem:app:joint}
gives
\begin{align}
\mathbb{E}\biggl[\frac{D}{D_m}
\frac{\langle \bar{o}_{D_m}, e_1\rangle}{\langle \bar{o}, o\rangle}\biggr]
&= \mathbb{E}\biggl[\frac{D}{D_m}
\frac{\sqrt{D_m/D - \langle \bar{o}_{D_m}, o\rangle^2}}
     {\langle \bar{o}, o\rangle}\, X_1\biggr]
\label{eq:app:unb1}\\
&= \mathbb{E}\biggl[\frac{D}{D_m}
\frac{\sqrt{D_m/D - \langle \bar{o}_{D_m}, o\rangle^2}}
     {\langle \bar{o}, o\rangle}\biggr]
\cdot \mathbb{E}[X_1]
\label{eq:app:unb2}\\
&= 0 ,
\label{eq:app:unb3}
\end{align}
where \eqref{eq:app:unb1} is by Lemma~\ref{lem:app:joint},
\eqref{eq:app:unb2} is due to the independence between $X_1$ and
$P^{-1}o$ (the first factor is a function of $P^{-1}o$, bounded
because $\langle \bar{o}, o\rangle \ge 1/\sqrt{D}$), and
\eqref{eq:app:unb3} is because $p_{D-1}$ has mean $0$. Multiplied by
$\sqrt{1-\beta^2}$, the second term has mean zero, which completes
the proof of the unbiasedness.

\smallskip\noindent\emph{Error bound.}
By Lemma~\ref{lem:app:joint}, the error has the same distribution as
\begin{equation}
\label{eq:app:Z}
Z := \beta\,(A_{D_m} - 1)
+ \sqrt{1-\beta^2}\;\frac{D}{D_m}
  \frac{\sqrt{D_m/D - \langle \bar{o}_{D_m}, o\rangle^2}}
       {\langle \bar{o}, o\rangle}\, X_1 ,
\end{equation}
so it suffices to bound $Z$. Fix $0 < \epsilon \le \sqrt{D_m}/3$ and
consider three events:
\begin{itemize}
\item $\mathcal{E}_1$: $\bigl\lvert \langle \bar{o}_{D_m}, o\rangle
  - (D_m/D)\,\mathbb{E}\langle \bar{o}, o\rangle\bigr\rvert
  \le \epsilon\sqrt{D_m}/D$. By Lemma~\ref{lem:app:share} with
  $u = \epsilon$, $\Pr\{\mathcal{E}_1^{\mathrm{c}}\} \le 2e^{-c\epsilon^2}$.
\item $\mathcal{E}_2$: $\bigl\lvert \langle \bar{o}, o\rangle
  - \mathbb{E}\langle \bar{o}, o\rangle\bigr\rvert
  \le \epsilon/\sqrt{D}$. By
  Eq.~\eqref{eq:app:B3} with $u = \epsilon$,
  $\Pr\{\mathcal{E}_2^{\mathrm{c}}\} \le 2e^{-c\epsilon^2}$. On
  $\mathcal{E}_2$, $\langle \bar{o}, o\rangle \ge \mathbb{E}\langle \bar{o}, o\rangle/2$,
  because $\epsilon/\sqrt{D} \le \epsilon/\sqrt{D_m} \le 1/3
  < 0.79/2 < \mathbb{E}\langle \bar{o}, o\rangle/2$.
\item $\mathcal{E}_3$: $\lvert X_1\rvert \le \epsilon/\sqrt{D-1}$.
  By the tail bound~(32) of~\cite{RaBitQTR} in dimension $D-1$,
  $\Pr\{\mathcal{E}_3^{\mathrm{c}}\} \le 2e^{-c_0\epsilon^2}$.
\end{itemize}
On $\mathcal{E}_1 \cap \mathcal{E}_2$, by Eq.~\eqref{eq:app:l1} and
the triangle inequality,
\begin{equation*}
\begin{aligned}
\lvert A_{D_m} - 1\rvert
&= \frac{\bigl\lvert (D/D_m)\,\langle \bar{o}_{D_m}, o\rangle
        - \langle \bar{o}, o\rangle\bigr\rvert}
       {\langle \bar{o}, o\rangle} \\
&\le \frac{(D/D_m)\cdot\epsilon\sqrt{D_m}/D + \epsilon/\sqrt{D}}
         {\mathbb{E}\langle \bar{o}, o\rangle/2}
\le \frac{4}{\mathbb{E}\langle \bar{o}, o\rangle}\cdot
    \frac{\epsilon}{\sqrt{D_m}} ,
\end{aligned}
\end{equation*}
where the numerator bounds
$\bigl\lvert(D/D_m)\,\langle \bar{o}_{D_m}, o\rangle - \mathbb{E}\langle \bar{o}, o\rangle\bigr\rvert$
by $\mathcal{E}_1$ and
$\bigl\lvert\mathbb{E}\langle \bar{o}, o\rangle - \langle \bar{o}, o\rangle\bigr\rvert$ by
$\mathcal{E}_2$, and $\epsilon/\sqrt{D} \le \epsilon/\sqrt{D_m}$ gives the
last step.

On $\mathcal{E}_2 \cap \mathcal{E}_3$, the second term of $Z$ is at most
(since $\sqrt{D_m/D - \langle \bar{o}_{D_m}, o\rangle^2} \le \sqrt{D_m/D}$
and $\sqrt{1-\beta^2} \le 1$)
\begin{equation*}
\begin{aligned}
\frac{D}{D_m}\sqrt{\frac{D_m}{D}}\cdot
\frac{\epsilon/\sqrt{D-1}}{\mathbb{E}\langle \bar{o}, o\rangle/2}
&= \frac{2}{\mathbb{E}\langle \bar{o}, o\rangle}
  \sqrt{\frac{D}{D-1}}\cdot\frac{\epsilon}{\sqrt{D_m}} \\
&\le \frac{2\sqrt{2}}{\mathbb{E}\langle \bar{o}, o\rangle}\cdot
    \frac{\epsilon}{\sqrt{D_m}}
\end{aligned}
\end{equation*}
in magnitude, using $D/(D-1) \le 2$. Hence, with $\lvert\beta\rvert \le 1$,
on $\mathcal{E}_1 \cap \mathcal{E}_2 \cap \mathcal{E}_3$,
\begin{equation*}
\lvert Z\rvert
\le \frac{4 + 2\sqrt{2}}{\mathbb{E}\langle \bar{o}, o\rangle}\cdot
    \frac{\epsilon}{\sqrt{D_m}}
< \frac{9\,\epsilon}{\sqrt{D_m}} ,
\end{equation*}
the last step because $\mathbb{E}\langle \bar{o}, o\rangle > 0.79$ and
$(4 + 2\sqrt{2})/0.79 < 9$. By the union bound,
\begin{equation}
\label{eq:app:union}
\Pr\bigl\{\lvert Z\rvert > 9\,\epsilon/\sqrt{D_m}\bigr\}
\le \sum_{j=1}^{3}\Pr\bigl\{\mathcal{E}_j^{\mathrm{c}}\bigr\}
\le 4e^{-c\epsilon^2} + 2e^{-c_0\epsilon^2}
\le 6e^{-c'\epsilon^2} ,
\end{equation}
where $c' := \min(c_0, c)$. Since a probability is at most $1$, the
right-hand side of Eq.~\eqref{eq:app:union} can be replaced by
$2e^{-c''\epsilon^2}$ with $c'' := c'\ln 2/\ln 6$: when
$c'\epsilon^2 \le \ln 6$ the new bound is at least $1$, and otherwise
$6e^{-c'\epsilon^2} \le 2e^{-c''\epsilon^2}$ because
$(c' - c'')\epsilon^2 = (c'\ln 3/\ln 6)\,\epsilon^2 > \ln 3$. Finally,
applying the result with $\epsilon/9$ in place of $\epsilon$ (which
stays within the range $\epsilon/9 \le \sqrt{D_m}/3$) gives
$\Pr\{\lvert Z\rvert > \epsilon/\sqrt{D_m}\} \le 2e^{-c_1\epsilon^2}$
with $c_1 := c''/81$, which is the error bound of
Theorem~\ref{thm:split-lb}.

\subsection{Empirical Error Distribution}
\label{sec:app:distribution}

We examine the estimation error for synthetic
unit-vector pairs in $\mathbb{R}^{768}$ with prescribed
inner products $\beta=\langle o,q\rangle\in\{0.5,0.9\}$.
For each fixed pair, we evaluate both $D_m=D$ and $D_m=D/2$
over $10^6$ independent Haar-distributed rotations, which
preserve $\beta$ while varying the quantization error.
Figure~\ref{fig:split-error-distribution} shows error distributions
centered near zero, consistent with the estimator's unbiasedness.
Retaining half the coordinates broadens both distributions,
reflecting the additional uncertainty introduced by the prefix.

\begin{figure}[!htbp]
  \centering
  \includegraphics[width=0.8\columnwidth]{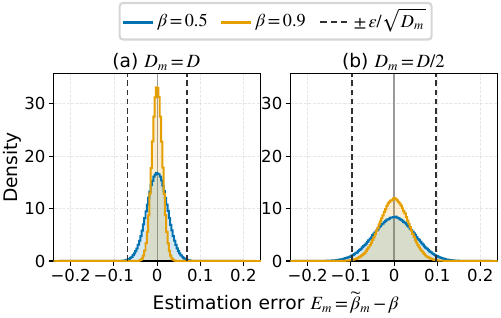}
  \caption{Empirical estimation-error distributions for
  $D=768$ with (a) $D_m=D$ and (b) $D_m=D/2$.
  Dashed lines mark $\pm\epsilon/\sqrt{D_m}$ for $\epsilon=1.9$.}
  \label{fig:split-error-distribution}
\end{figure}

\begin{table*}[!htbp]
  \centering
  \caption{Search-result determinism on YFCC-100M
  ($T=64$). Percentages give the fraction of queries with
  unchanged results across all eight runs, treating substitutions
  between points at the same true distance to the query as
  unchanged. The rank ranges refer to positions in the returned
  list, with $r=0.4k$ for both variants; order within each set
  is ignored. Recall spread is the difference between the
  maximum and minimum recall across runs, in percentage
  points (pp).}
  \label{tab:app:determinism}
  \small
  \setlength{\tabcolsep}{7pt}
  \begin{tabular}{@{}rlrrrr@{}}
    \toprule
    & & \multicolumn{3}{c}{Queries with unchanged results (\%)}
    & \\
    \cmidrule(lr){3-5}
    $k$ & Variant & Top-$k$ & $[1,r]$ & $(r,k]$
        & Recall spread (pp) \\
    \midrule
    10  & Scheduled & 99.95 & 100.00 & 99.95 & 0.003 \\
        & Full      & 97.89 & 99.98  & 97.89 & 0.022 \\
    \midrule
    100 & Scheduled & 99.96 & 100.00 & 99.96 & 0.0002 \\
        & Full      & 35.85 & 99.66  & 35.85 & 0.645 \\
    \bottomrule
  \end{tabular}%

\end{table*}

We use $\epsilon=1.9$, following RaBitQ~\cite{RaBitQ}, and count
a lower-bound violation when
$\widehat{\beta}_{D_m}-\beta<-\epsilon/\sqrt{D_m}$.
For $\beta=0.5$, the violation rate increases from $0.18\%$
at $D_m=D$ to $2.17\%$ at $D_m=D/2$.
The errors at $\beta=0.9$ are more concentrated:
no violations are observed with the full code in $10^6$ trials,
while the half-length prefix yields a rate of $0.21\%$.

\section{Search-Result Determinism}
\label{sec:app:determinism}

We examine search-result determinism using the $\nprobe$
selected for Full to achieve approximately $0.90$ recall, keeping
the same value for Scheduled. We execute the query workload eight
times, three in the original order and five with shuffled query
orders. Table~\ref{tab:app:determinism} compares the returned
neighbor sets across these runs.

Scheduled is nearly deterministic after accounting
for distance ties. Among $10{,}000$ queries, only $5$ at $k=10$
and $4$ at $k=100$ return different top-$k$ results across
the eight runs.
The argument of search-result determinism assumes exact re-ranking and
valid lower bounds.
The remaining changes occur near the rank-$k$ boundary,
where the affected candidates have lower bounds above their
true distances and the final $k$-th distances in runs that
omit them.
Such violations reflect the probabilistic nature of the lower
bound, also illustrated by the empirical error distributions
in \S\ref{sec:app:distribution}.
Although a candidate's lower bound is fixed across runs,
the candidate may be re-ranked before the running threshold
falls below that bound, or pruned if tested later.
Recall varies by at most $0.003$ percentage points across runs.

Full's output varies mainly in the tail. At $k=100$,
the unchanged-query fraction is $35.85\%$ for both
the entire top-$k$ and ranks $(0.4k,k]$, compared with
$99.66\%$ for the first $40$ positions.
This pattern is consistent with the $\alpha$-quantile
threshold making retention of neighbors at ranks
$(\alpha k,k]$ sensitive to I/O completion order.
Earlier re-ranking of closer candidates can tighten
the threshold before these neighbors' pages are issued,
whereas a different completion order may allow those
pages to be read.
Although I/O completion order affects the returned neighbors,
recall remains relatively stable across the eight runs,
ranging from $0.9005$ to $0.9069$ at $k=100$.

\section{Additional Evaluation Results}
\label{sec:app:additional}

\subsection{Main Evaluation}
\label{sec:app:main}

Table~\ref{tab:main-qps} reports throughput at recall
$0.90$ and $0.95$ for all five datasets in the main evaluation
(\S\ref{sec:exp:main}).

\begin{table*}[!htbp]
\caption{Throughput (QPS) at recall $0.90$ and $0.95$
on the five datasets in Figure~\ref{fig:main}, at $k=10$ and
$T=64$, under both memory budgets. Bold indicates the highest
QPS for each dataset, memory budget, and recall.}
\label{tab:main-qps}
\centering
\small
\setlength{\tabcolsep}{3pt}
\begin{tabular}{@{}clrrrrrrrr@{}}
\toprule
 & & \multicolumn{4}{c}{Recall $0.90$}
 & \multicolumn{4}{c}{Recall $0.95$} \\
\cmidrule(lr){3-6}\cmidrule(l){7-10}
Budget & Dataset & \OURS & DiskANN & PipeANN & Starling
 & \OURS & DiskANN & PipeANN & Starling \\
\midrule
\multirow{5}{*}{$D_m = D$} & OpenAI-5M & \textbf{43\,418} & 25\,759 & 26\,484 & 25\,466 & \textbf{22\,931} & 18\,717 & 18\,360 & 18\,598 \\
 & Wiki-10M & \textbf{42\,279} & 36\,847 & 31\,455 & 35\,965 & \textbf{28\,571} & 25\,209 & 20\,778 & 24\,181 \\
 & DPR-100M & \textbf{6\,690} & 2\,946 & 3\,834 & 3\,333 & \textbf{2\,829} & 1\,291 & 1\,736 & 1\,512 \\
 & LAION-100M & \textbf{50\,509} & 39\,366 & 40\,233 & 36\,109 & \textbf{32\,252} & 27\,175 & 27\,207 & 25\,740 \\
 & YFCC-100M & \textbf{42\,057} & 23\,017 & 26\,701 & 25\,568 & \textbf{22\,418} & 17\,788 & 19\,231 & 20\,073 \\
\midrule
\multirow{5}{*}{$D_m = D/2$} & OpenAI-5M & \textbf{32\,818} & 16\,695 & 16\,561 & 16\,590 & \textbf{24\,345} & 10\,234 & 9\,700 & 10\,147 \\
 & Wiki-10M & \textbf{21\,047} & 18\,559 & 16\,206 & 17\,283 & \textbf{17\,022} & 10\,388 & 9\,164 & 9\,827 \\
 & DPR-100M & \textbf{8\,298} & 310 & 592 & 338 & \textbf{3\,814} & 142 & 236 & 97 \\
 & LAION-100M & \textbf{26\,631} & 18\,976 & 19\,162 & 19\,177 & \textbf{21\,002} & 9\,383 & 9\,182 & 9\,858 \\
 & YFCC-100M & \textbf{36\,465} & 14\,851 & 16\,360 & 16\,367 & \textbf{26\,120} & 9\,427 & 9\,961 & 10\,349 \\
\bottomrule
\end{tabular}%

\end{table*}

\subsection{Ablation Study}
\label{sec:app:ablation}
Figure~\ref{fig:app:ablation} shows the throughput--recall
curves for the three ablations on YFCC-100M and LAION-100M.

On YFCC-100M, \OURS ($B=9$) is faster than DiskANN (disk-PQ)
at recall $0.90$, but the ordering reverses at recall $0.95$
(Figure~\ref{fig:app:ablation}(b)).
Reaching the higher recall requires more probing, increasing the
work spent scanning full-dimensional in-memory codes.
FastScan becomes the main bottleneck for \OURS, limiting the
throughput benefit of fewer SSD reads.

The gap between $B=5$ and SQ4 grows at higher recall on both
datasets (Figure~\ref{fig:app:ablation}(c,f)).
SQ4 requires substantially more probing to reach higher recall,
reducing throughput. In the measured sweep, SQ4 reaches recall
$0.95$ on YFCC-100M at much lower throughput than $B=5$, and
does not reach recall $0.95$ on LAION-100M.

\begin{figure*}[!htbp]
  \centering
  \includegraphics[width=0.8\textwidth]{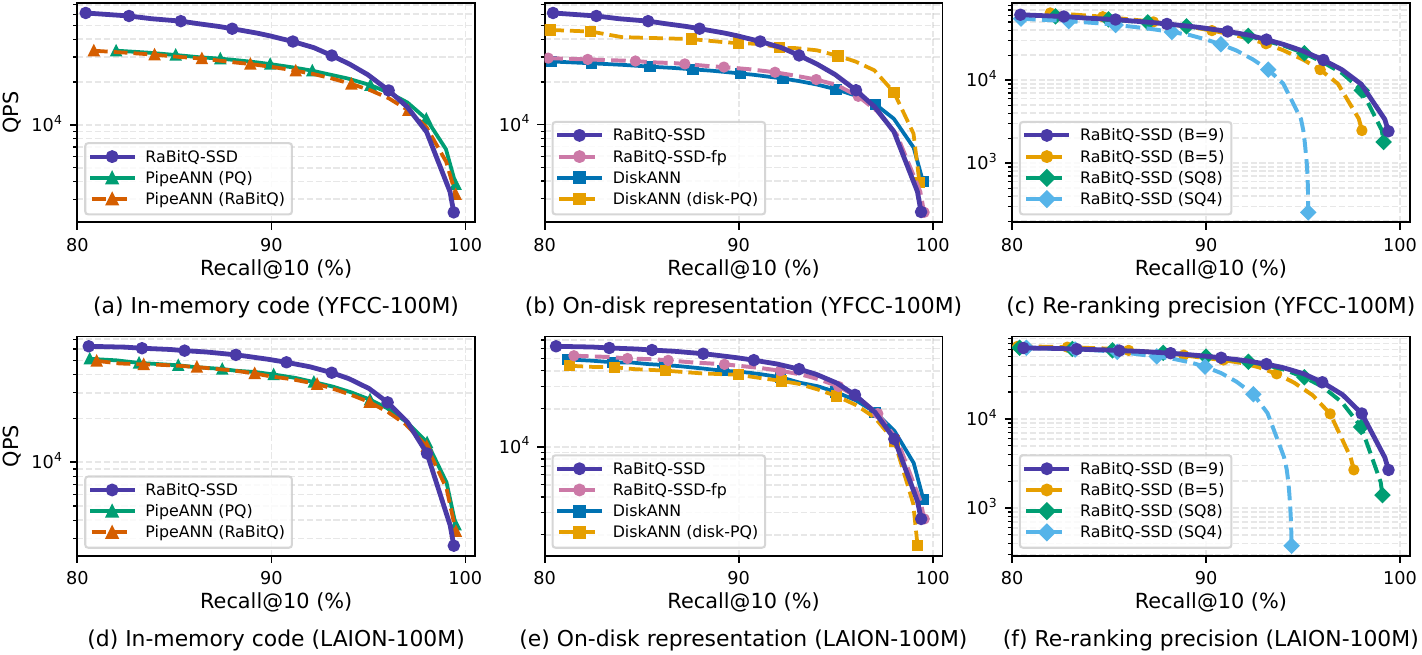}
  \caption{Ablation on YFCC-100M (top row) and LAION-100M
  (bottom row), at $D_m=D$, $k=10$, and $T=64$.
  The columns compare the in-memory code, on-disk representation,
  and re-ranking precision, respectively. All QPS axes use a log scale.}
  \label{fig:app:ablation}
\end{figure*}

\subsection{Parameter Study}
\label{sec:exp:param}

\noindent\textbf{The in-flight budget $Q$.}
Figure~\ref{fig:qdstudy} reports throughput at recall $0.90$
against $Q$ on YFCC-100M, one line per thread count
($D_m = D$). When $Q$ is too small, a thread has too few reads in flight to use
the SSD's internal parallelism. At $k = 1000$, where a query issues
hundreds of reads, throughput at $T = 1$ falls to about a
third as $Q$ goes from $16$ to $4$; at $k = 10$ the same change
barely changes throughput, because the few reads hide behind
compute, as the $T = 1$ timeline of Fig.~\ref{fig:timeline} showed. When $Q$ is
too large, the pops reach deeper into the priority set and commit
candidates that a slightly later threshold would have pruned
(\S\ref{sec:pipe:invlist}). Each such commit reads a page the query did not
need: the page is fetched, swept, and re-screened, some of its
members are re-ranked, and none of that work would have happened at
a smaller $Q$. Under a heavy I/O load the excess reads
also lengthen completion latency (\S\ref{sec:pipe:alpha}), so
candidates are committed against a staler threshold and fewer of
them are pruned. Past $Q = 16$ throughput declines slightly at
$T = 1$; at $T = 64$ and $k = 10$ it falls to about a third of its
value at the best $Q$. The best $Q$ therefore rises with $k$ and falls with $T$, and the deployed rule ($Q = 16$ for
$T \le 16$, $Q = 8$ above) stays within $3.7\%$ of the throughput at
the best $Q$ at every $T$ and $k$ measured.

\begin{figure}[!htbp]
  \centering
  \includegraphics[width=\linewidth]{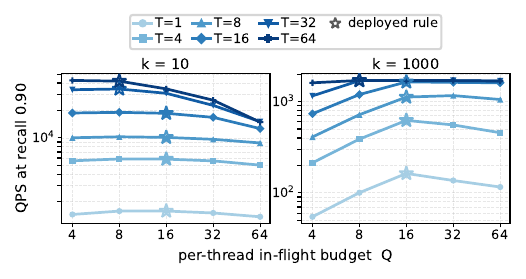}
  \caption{The in-flight budget $Q$ on YFCC-100M ($D_m = D$): QPS at
  recall $0.90$ against the per-thread $Q$, one line per thread count.
  Stars mark the deployed rule ($Q = 16$ for $T \le 16$, $8$ for
  $T > 16$). }
  \label{fig:qdstudy}
\end{figure}

\begin{figure}[!htbp]
  \centering
  \includegraphics[width=\linewidth]{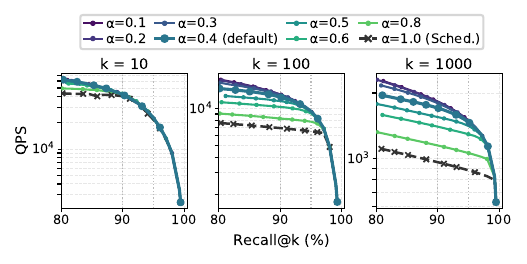}
  \caption{The drain threshold $\alpha$ on YFCC-100M at $T = 64$ and
  $D_m = D$: throughput against recall, one panel per $k$, one curve
  per $\alpha$ (log scale). $\alpha = 1.0$ disables the drain
  tightening and recovers Scheduled (\S\ref{sec:pipe:alpha}); the
  heavier curve is the default $\alpha = 0.4$.}
  \label{fig:alphastudy}
\end{figure}

\smallskip\noindent\textbf{The drain threshold $\alpha$.}
Figure~\ref{fig:alphastudy} reports throughput against recall for
$\alpha \in \{0.1, 0.2, 0.3, 0.4, 0.5, 0.6, 0.8, 1.0\}$ on
YFCC-100M at $T = 64$ and $D_m = D$. $\alpha = 1$ recovers Scheduled
(\S\ref{sec:pipe:alpha}); $\alpha = 0.4$ is the default. A tighter $\alpha$
prunes more of the pages left pending when the scan ends, so a
query reads fewer of them: at recall $0.90$ and $k = 100$, the
pages per query are $125$ at $\alpha = 0.1$, $132$ at the default,
and $236$ at $\alpha = 1$; at $k = 1000$ they are $893$, $957$, and
$1{,}699$.

Below the default the count barely moves, while above it nearly
doubles. At recall $0.90$ the default stays within $6.6\%$ of the
throughput at the best $\alpha$ at every $k$ measured.